\title{Collaborative Wireless Energy and Information Transfer in Interference Channel
\thanks{The authors are with the Department of Electrical and Computer Engineering, National University of Singapore, Singapore (email: \{s.lee, liu\_liang, elezhang\}@nus.edu.sg). R. Zhang is also with the Institute for Infocomm Research, A*STAR, Singapore.} 
}
\author{Seunghyun Lee,~\IEEEmembership{Student Member,~IEEE}, Liang Liu, and  Rui Zhang,~\IEEEmembership{Member,~IEEE}}
\newtheorem{proposition}{\underline{Proposition}}[section]
\newcommand{\mv}[1]{\mbox{\boldmath{$ #1 $}}}
\def\E{\mathsf{E}}
\def\phi{\varphi}
\def\l{\left}
\def\r{\right}
\def\({\left(}
\def\){\right)}
\def\b0{{\mathbf{0}}}
\def\mC{{\mathbb{C}}}
\def\cC{\mathcal{C}}
\def\cF{\mathcal{F}}
\def\cK{\mathcal{K}}
\def\cL{\mathcal{L}}
\def\cM{\mathcal{M}}
\def\cN{\mathcal{N}}
\def\cP{\mathcal{P}}
\def\cS{\mathcal{S}}
\newcommand{\nn}{\nonumber}
\def\EH{\mathsf{EH}}
\def\ID{\mathsf{ID}}
\def\NC{\mathsf{NC}}
\def\PC{\mathsf{PC}}
\def\FC{\mathsf{FC}}
\def\IA{\mathsf{IA}}
\def\Pmax{P_\mathsf{max}}
\begin{document}
\maketitle \thispagestyle{empty}

\begin{abstract}
This paper studies the \emph{simultaneous wireless information and power transfer} (SWIPT) in a multiuser wireless system, in which distributed transmitters send independent messages to their respective receivers, and at the same time cooperatively transmit wireless power to the receivers via energy beamforming. Accordingly, from the wireless information transmission (WIT) perspective, the system of interest can be modeled as the classic interference channel, while it also can be regarded as a distributed multiple-input multiple-output (MIMO) system for collaborative wireless energy transmission (WET). To enable both information decoding (ID) and energy harvesting (EH) in SWIPT, we adopt the low-complexity \emph{time switching} operation at each receiver to switch between the ID and EH modes over scheduled time. For the hybrid system, we aim to characterize the achievable rate-energy (R-E) trade-offs by various transmitter-side collaboration schemes. Specifically, to facilitate the collaborative energy beamforming, we propose a new \emph{signal splitting} scheme at the transmitters, where each transmit signal is generally split into an \emph{information signal} and an \emph{energy signal} for WIT and WET, respectively. With this new scheme, first, we study the two-user SWIPT system over the fading channel and derive the optimal mode switching rule at the receivers as well as the corresponding transmit signal optimization to achieve various R-E trade-offs. We also compare the R-E performance of our proposed scheme with transmit energy beamforming and signal splitting against two existing schemes with partial or no cooperation of the transmitters. Next, the general case of SWIPT systems with more than two users is studied, for which we propose a practical transmit collaboration scheme by extending the result for the two-user case: we group users into different pairs and apply the cooperation schemes obtained in the two-user case to each paired group. Furthermore, we present a benchmarking scheme based on joint cooperation of all the transmitters inspired by the principle of \emph{interference alignment}, against which the performance of the proposed scheme is compared.
\end{abstract}

\begin{keywords}
Simultaneous wireless information and power transfer (SWIPT), energy harvesting, energy beamforming, interference channel, interference alignment.
\end{keywords}

\section{Introduction} \label{Section:Introduction}
\PARstart{S}IMULTANEOUS wireless information and power transfer (SWIPT), as an emerging technology by which mobile devices are enabled with both wireless information and energy access at the same time, has recently drawn significant interests. However, the new consideration of dual wireless information transmission (WIT) and wireless energy transmission (WET) imposes various new challenges on wireless system design (see, e.g., \cite{J_ZH:2013, A_ZZH, J_LZC:2013_a, J_LZC:2013_b, J_FO:2012,J_NZDK:2013,J_DPEP:2014, J_PFS:2013, C_XLZ:2013, A_SLXZ, J_HL:2013, A_ZZH_b, A_NLS, J_LZC:2014, J_NLS:2014,C_SLC:2012, C_TKO:2013, J_PC:2013}). Among others, one critical issue for implementing SWIPT is the practical limitation that existing energy harvesting circuits cannot be used to decode and harvest the radio-frequency (RF) signals concurrently \cite{J_ZH:2013, A_ZZH}. To overcome this difficulty, two practical receiver designs have been proposed for SWIPT, namely time switching (TS) and power splitting (PS) \cite{J_ZH:2013, A_ZZH}. With TS, a receiver switches its operation between the two modes of information decoding (ID) and energy harvesting (EH) over time, while with PS, the received signal is split into two streams with one stream used for ID and the other stream for EH. It is worth noting that TS can be practically implemented at a relatively lower cost as compared to PS, since the former requires only signal switches at the receivers whereas the latter needs more costly signal splitters.

Building upon these two practical receiver designs, a handful of research work on SWIPT has been reported recently. In particular, for point-to-point wireless channels, two practical receiver architectures for SWIPT have been proposed in \cite{A_ZZH} with separated or integrated ID and EH circuits at the receiver, based on which the authors characterized various performance  trade-offs in WIT versus WET via the boundary of a so-called \emph{rate-energy (R-E) region}. In \cite{J_LZC:2013_a} and \cite{J_LZC:2013_b}, the authors have investigated the optimal TS and PS schemes, respectively, for fading SWIPT channels. The TS and/or PS schemes have been further studied in wireless relay-assisted communications  \cite{J_FO:2012,J_NZDK:2013,J_DPEP:2014}. An information-theoretic study on the point-to-point SWIPT channel was also given in \cite{J_PFS:2013}.

Furthermore, for the case of wireless broadcast channels, a multiple-input multiple-output (MIMO) SWIPT system has been first studied in \cite{J_ZH:2013}, which optimizes the spatial transmit precoding for achieving various R-E trade-offs for a pair of ID and EH receivers that are either separated or co-located. It is worth noting that in \cite{J_ZH:2013} the rank-one transmit precoding (namely, energy beamforming) was shown to be optimal if only the efficiency of WET is maximized under a sum-power constraint at the multi-antenna transmitter. The work in \cite{J_ZH:2013} has been extended to multiple-input single-output (MISO) SWIPT systems with more than two single-antenna receivers in \cite{C_XLZ:2013} with TS receivers and in \cite{A_SLXZ} with PS receivers, respectively.  Moreover, SWIPT based broadcast systems have been further investigated in multiuser orthogonal frequency division multiplexing (OFDM) channels \cite{J_HL:2013,A_ZZH_b,A_NLS}, and also for secrecy beamforming design problems \cite{J_LZC:2014,J_NLS:2014}.

Besides the point-to-point and point-to-multipoint (i.e., broadcast channel) setups, the study on SWIPT for the more general multipoint-to-multipoint systems has been recently pursued in \cite{C_SLC:2012, C_TKO:2013, J_PC:2013}, in which multiple transmitters send independent messages to their corresponding receivers, and at the same time broadcast power wirelessly to all the receivers. From the perspective of WIT, the system can be modeled as the classic  interference channel (IC), while it also can be regarded as a MIMO WET system with distributed transmitter and receiver nodes. Specifically, in \cite{C_SLC:2012} and \cite{C_TKO:2013}, the authors have studied various transmit beamforming schemes in MISO-IC based SWIPT systems with TS and PS receivers, respectively. Furthermore, in \cite{J_PC:2013}, a two-user SWIPT system under the MIMO-IC setup with TS receivers has been investigated, where the two receivers are assumed to switch among the following four possible operation modes: mode $(\EH,\EH)$, where both receivers harvest energy, mode $(\ID,\EH)$ (or mode $(\EH,\ID)$), where one receiver decodes information (from its intended transmitter) and the other receiver harvests energy, or mode $(\ID,\ID)$, where both receivers decode information from their corresponding transmitters. For each of the above four operation modes, the achievable R-E trade-offs have been analyzed in \cite{J_PC:2013}, especially for the high signal-to-noise ratio (SNR) regime by assuming independent (non-collaborative) WET of the two transmitters.

\begin{figure}
\centering
\includegraphics[width=8cm]{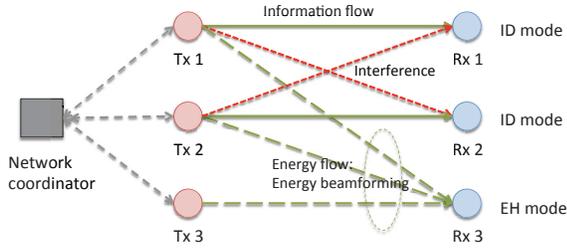} 
\caption{A three-user SWIPT system with collaborative WET.} \label{Fig:IC} \vspace{-15pt}
\end{figure}

Although the SWIPT system under the general multipoint-to-multipoint setup is practically  modeled as MIMO-IC for WIT due to the lack of joint processing and message sharing over the transmitters, from the perspective of WET, we can further improve the energy transfer efficiency of MIMO WET by jointly optimizing the energy signal waveforms at different transmitters based on the MIMO channels to the receivers, thus achieving an \emph{energy beamforming} gain \cite{J_ZH:2013}. However, different from \cite{J_ZH:2013} where the transmit antennas are all equipped at one single transmitter and thus practically subject to a sum-power budget, the energy beamforming design here needs to consider a set of individual power constraints for distributed transmitters.  It is worth pointing out that the energy signals at distributed transmitters can be designed offline and stored for real-time transmission (as will be shown later in this paper), which is in sharp contrast to information signals that are independent and randomly distributed over different transmitters and as a result their real-time joint processing is more difficult to be implemented than collaborative energy beamforming. Furthermore, in this paper we introduce a new \emph{signal splitting} scheme for distributed transmitters, where each transmit signal is in general composed of an energy signal component and an information signal component, in order to facilitate collaborative WET (to other receivers in EH mode) via energy beamforming concurrently with WIT (to its intended receiver in ID mode). It is also assumed that the offline-designed energy signals are perfectly known at all the receivers and thus they can be practically canceled at each receiver prior to decoding the desired information signal. For the purpose of illustration, an example of a three-user SWIPT system with proposed collaborative WET is depicted in Fig.~\ref{Fig:IC}, where a network coordinator is assumed to collect the information from all the transmitters required for the joint design of transmitters' signal splitting and energy beamforming, and then send the designed parameters to individual transmitters for their real-time transmission. It is worth pointing out that the signal splitting concept has also been introduced in other SWIPT systems, but for different purposes; for example, in \cite{J_LZC:2014} and \cite{J_NLS:2014}, the transmit signal is split into information and energy signals, where the latter carries artificial noise for protecting the information sent from being eavesdropped by the energy receivers.

The main results of this paper are summarized as follows:

\begin{itemize}
\item
 First, we consider the special case of a two-user SWIPT system to obtain insight as well as optimal design. In this case, similar to \cite{J_PC:2013}, we assume that the SWIPT system operates by switching among the four modes of $(\EH,\EH)$, $(\ID,\EH)$, $(\EH,\ID)$, and $(\ID,\ID)$. However, different from \cite{J_PC:2013}, due to the newly proposed transmit cooperation with energy beamforming and signal splitting, new analysis is given to characterize the achievable R-E performance. Specifically, for mode $(\EH,\EH)$, we show that energy beamforming with one single energy beam is optimal for maximizing the WET efficiency under per-transmitter power constraints, and also derive the closed-form expression for the optimal energy beamforming weights. We then apply the energy beamforming design jointly with signal splitting to optimize the R-E trade-off for mode $(\ID,\EH)$ or $(\EH,\ID)$. The results obtained are applied to SWIPT over block-fading channels, under which the optimal mode switching rule at the receivers and the corresponding transmit signal optimization are solved. We also compare the R-E performance of our proposed scheme with collaborative energy beamforming and signal splitting against two existing schemes in the literature with partial transmit collaboration \cite{J_PC:2013} or no transmit collaboration \cite{J_LZC:2013_a}. We show that our new scheme achieves substantially enlarged R-E regions as compared to these two baseline schemes.

\item Next, we study the general $K$-user SWIPT system with $K>2$. Due to the prohibitive complexity of exhaustively searching for the optimal operation modes for all users as well as the associated signal splitting and precoding designs for collaborative WET when $K$ becomes large,  we propose a suboptimal scheme of lower complexity. This scheme, called \emph{pairwise cooperation}, is based on users' pairwise collaboration, where we divide the $K$ users into $K/2$ groups (assuming $K$ is an even integer) with each group consisting of two transmitter-receiver pairs. Then over different paired groups, we apply the collaboration schemes obtained for the two-user case. Furthermore, to obtain a performance benchmark, we present a new scheme based on joint collaboration of all the transmitters, which is inspired by a recently introduced transmission technique for the $K$-user IC, called \emph{ergodic interference alignment} \cite{J_NGJV:2012}. In this scheme, all the transmitters/receivers switch their operation modes between ID and EH synchronously, where in ID mode, the transmitters employ ergodic interference alignment for collaborative WIT while in EH mode, they employ energy beamforming (with more than one energy beams in general) for collaborative WET. 

\end{itemize}

The rest of this paper is organized as follows. Section~\ref{Section:SystemModel} introduces the system model of the $K$-user SWIPT system and our proposed signal design. Section~\ref{Section:ProblemFormulation} characterizes the optimal R-E trade-off over fading channels for the two-user case, as compared to two existing schemes. Section~\ref{Section:Multi-user} considers the multiuser SWIPT system. Section~\ref{Section:Numerical} provides simulation results. Finally, Section \ref{Section:Conclusion} concludes the paper.

{\it Notation}: Scalars are denoted by lower-case letters, vectors by bold-face lower-case letters, and matrices by bold-face upper-case letters. $\mv{I}$ and $\mv{0}$  denote an identity matrix and an all-zero matrix, respectively, with appropriate dimensions. For a square matrix $\mv{S}$, ${\rm Tr}(\mv{S})$, ${\rm Rank}(\mv{S})$, and $\mv{S}^{-1}$ denote its trace, rank, and inverse (if $\mv{S}$ is full-rank), respectively; $\mv{S}\succeq\mv{0}~(\mv{S}\preceq\mv{0})$ means that $\mv{S}$ is positive (negative) semi-definite. ${\rm Diag}(\mv{a})$ denotes a diagonal matrix with the main diagonal given by vector $\mv{a}$. For a matrix $\mv{M}$ of arbitrary size, $\mv{M}^{H}$ and $\mv{M}^{T}$ denote the conjugate transpose and transpose of $\mv{M}$, respectively; and $[\mv{M}]_{k,l}$ denotes the $(k,l)$ element of $\mv{M}$. $\E[\cdot]$ denotes the statistical expectation. The distribution of a circularly symmetric complex Gaussian (CSCG) random vector with mean $\mv{x}$ and covariance matrix $\mv{\Sigma}$ is denoted by $\cC\cN(\mv{x},\mv{\Sigma})$, and $\sim$ stands for ``distributed as''. $\mathbb{C}^{x \times y}$ denotes the space of $x\times y$ complex matrices. $\|\mv{z}\|$ denotes the Euclidean norm of a complex vector $\mv{z}$, while $|z|$ and $z^*$ are the absolute value and the complex conjugate of a complex number $z$, respectively.

\section{System Model} \label{Section:SystemModel}
We consider a $K$-user SWIPT system consisting of $K$ single-antenna transmitter-receiver (Tx-Rx) pairs, denoted by the set $\cK = \{1,2,...,K\}$. It is assumed that all Txs share the same band for WIT and WET over flat-fading channels. For WIT, the system can be modeled by the $K$-user single-input single-output (SISO) IC, since we do not consider joint information processing at different Txs. The baseband complex channel coefficient from Tx $l$ to Rx $k$ $(k,l\in\cK)$ is denoted by $h_{kl}$. It is assumed that $h_{kl}$'s are all known at the network coordinator (see Fig.~\ref{Fig:IC}), which provides the transmit design for all Txs. For convenience, we define the channels from all Txs to Rx $k$ in a vector $\mv{h}_k = [h_{k1},...,h_{kK}]$, $k\in\cK$. The discrete-time signal received at Rx $k$ is then given by
\begin{equation*} 
y_k(n) = \mv{h}_k\mv{x}(n) + z_k(n), \; k\in\cK, 
\end{equation*}
where $n$ denotes the symbol index; $\mv{x}(n)=[x_1(n),...,x_K(n)]^T$ is the transmit signal vector with $x_k(n)$ denoting the transmitted signal from Tx $k$; and $z_k(n)$ is the additive noise at Rx $k$. It is assumed that $z_k(n)\sim\cC\cN(0,\sigma_k^2)$, $\forall k\in\cK$. We consider the practical \emph{peak-power constraint} at each Tx, which limits the instantaneous transmit power, i.e., $\E_n[|x_k(n)|^2]\leq \Pmax$, $\forall k\in\cK$, where $\Pmax$ denotes the maximum power budget at all Txs.

In order to enable collaborative WET along with WIT, in this paper we employ a new  \emph{signal splitting} scheme, by which the transmit signal at each Tx in general consists of an  information signal component and an energy signal component, i.e., 
\begin{equation*}  
x_k(n) = x_k^I(n) + x_k^E(n), \; k\in\cK,
\end{equation*}
where $x_k^I(n)$ and $x_k^E(n)$ denote the information signal and energy signal at Tx $k$, respectively. It is assumed that the information signal $x_k^I(n)$ is an independent and identically distributed (i.i.d.) CSCG random variable with zero-mean and variance (power) $p_k^I$, denoted by $x_k^I(n)\sim\cC\cN(0,p_k^I)$, $k\in\cK$. Furthermore, for the energy signal $x_k^E(n)$, since it does not carry any information, it can be designed as a zero-mean pseudo-random signal with arbitrary distribution, provided that its power spectral density satisfies certain regulations on radio signal radiation for the operating band of interest. In this paper, we assume that $x_k^E(n)$ is also a CSCG random signal, i.e., $x_k^E(n)\sim\cC\cN(0,p_k^E)$, $k\in\cK$, with $p_k^E$ denoting its average power. Note that given the peak-power constraint $\Pmax$, we have $p_k^I+p_k^E\leq\Pmax$, $\forall k\in\cK$. 

For convenience, we define the transmit covariance matrix for the energy signals from all $K$ Txs as $\mv{S}_E = \E_n[\mv{x}_E(n)\mv{x}_E^H(n)]$, where $\mv{x}_E(n) = \l[x_1^E(n),...,x_K^E(n)\r]^T$. In practice, $\mv{S}_E$ conveys all the required parameters (i.e., power allocations and beamforming weights) for the design of collaborative WET by all Txs. Let ${\rm Rank}(\mv{S}_E)=d_E$, with $1\leq d_E\leq K$, and the eigenvalue decomposition of $\mv{S}_E$ be denoted by $\mv{S}_E=\mv{V}\mv{\Sigma}\mv{V}^H$, where $\mv{V}\in\mC^{K\times d_E}$, $\mv{V}^H\mv{V}=\mv{I}$, is the precoding matrix, and $\mv{\Sigma}={\rm Diag}(q_1,...,q_{d_E})$ with $q_1,...,q_{d_E}$ denoting the positive eigenvalues of $\mv{S}_E$. Then we can express the energy signal vector $\mv{x}_E(n) = \sum_{i=1}^{d_E}\sqrt{q_i}\mv{v}_i s_i^E(n)$, where $\mv{v}_i$ is a beamforming vector, which is drawn from $\mv{V}=[\mv{v}_1,...,\mv{v}_{d_E}]$, and $s_1^E(n),...,s_{d_E}^E(n)$ are i.i.d. pseudo-random  variables with $s_i^E(n)\sim\cC\cN(0,1)$, $i=1,...,d_E$. Note that for the special case of $d_E=1$, one single energy beam is used and all Txs transmit the same pre-determined energy signal $s_1^E(n)$ with different weights drawn from $\sqrt{q_1}\mv{v}_1$. From a practical consideration, it is desirable to have small $d_E$ due to the following two reasons. First, it is practically sensible to reduce the number of energy signals stored at each Tx which is equal to $d_E$. Second, as will be shown later in this paper, the pre-designed energy signals should be canceled at each Rx prior to ID to improve the WIT rate, which requires that each Rx cancels up to $d_E$ number of interference signals due to WET; thus, it is desirable to keep $d_E$ small.  

In this paper, we adopt the TS scheme at each Rx. For convenience, we define an indicator function to denote the working mode of Rx $k$ as follows:
\begin{align} \label{Eq:Mode}
\rho_k = \l\{\begin{aligned}
&1, \quad \mbox{ID mode is active at Rx $k$}, \\
&0, \quad \mbox{EH mode is active at Rx $k$}.
\end{aligned}
\r.
\end{align}

Using \eqref{Eq:Mode}, for WET, the harvested power at Rx $k$ due to the information/energy signals from all Txs can be expressed as 
\begin{align} 
Q_k & =  \zeta(1-\rho_k)\E_n[|\mv{h}_k\mv{x}(n)|^2] \nn\\
& =\zeta(1-\rho_k)\l(\sum_{l\in\cK} |h_{kl}|^2 p_l^I +  \mv{h}_k\mv{S}_E\mv{h}_k^H\r), \; k\in\cK, \label{Eq:Energy2}
\end{align}
where the constant $0<\zeta \leq 1$ represents the efficiency in harvesting and storing received energy. For notational brevity, we assume $\zeta=1$ in the sequel, unless otherwise stated. Note that since the background noise power $\sigma_k^2$ is practically much smaller as compared to the average received signal power from the viewpoint of WET, here we have ignored it in the expression of harvested power. On the other hand, for WIT, it is assumed that the interference at Rx $k$ due to the energy signals, i.e., $x_l^E(n)$, $l\in\cK$, can be first perfectly canceled by Rx $k$, since the  energy signals are pre-designed pseudo-random signals which can be stored at all Rxs for interference cancellation. However, the interference due to the information signals from other Txs, i.e., $x_l^I(n)$, $l\in\cK$, $l\neq k$, remains and is assumed to be additional noise at each Rx $k$, for a practical receiver implementation.  Therefore, for WIT, the achievable rate at Rx $k$ can be expressed as
\begin{equation}\label{Eq:Rate}
R_k = \rho_k\log_2\l(1+\frac{|h_{kk}|^2 p_k^I}{\sum_{l\in\cK,l\neq k}|h_{kl}|^2p_l^I + \sigma_k^2}\r), \; k\in\cK.
\end{equation} 

In the following two sections, we first investigate the transmit collaboration designs and the achievable R-E performance for the special case of a two-user SWIPT system over fading channels, and then address the general $K$-user SWIPT system with $K>2$.

\section{Collaborative Transmission for SWIPT: Two-user Case} \label{Section:ProblemFormulation}
In this section, we focus on the two-user SWIPT system, i.e., $K=2$, over flat-fading channels. For the purpose of exposition, in the sequel we use index $\nu$ to indicate channel fading state, e.g., $h_{kl}(\nu)$ denotes the channel from Tx $l$ to Rx $k$ at fading state $\nu$. We assume the block fading model such that the channel $h_{kl}(\nu)$, $k,l\in\cK$, remains constant during each block for a given fading state $\nu$, but can vary from block to block as $\nu$ changes. Furthermore, we define $\mv{\rho}(\nu)=(\rho_1(\nu), \rho_2(\nu))$ as the working modes of the two users at fading state $\nu$, and denote the set of all four possible mode combinations as $\cM = \{\mv{\rho}(\nu):\rho_k(\nu) \in \{0,1\}, k=1,2\}$. Specifically, the four modes are $(\EH,\EH)$ with $\mv{\rho}(\nu)=(0,0)$, $(\ID,\EH)$ with $\mv{\rho}(\nu)=(1,0)$, $(\EH,\ID)$ with $\mv{\rho}(\nu)=(0,1)$, and $(\ID,\ID)$ with $\mv{\rho}(\nu)=(1,1)$, at fading state $\nu$, similar to those considered in \cite{J_PC:2013}.

In the rest of this section, we first formulate the design problem for characterizing the optimal R-E trade-off of the two-user system, by jointly optimizing the Rxs' mode switching rule and Txs' collaborative signal design. Next, we derive the optimal solution to this problem. Finally, we introduce two suboptimal schemes based on the existing results in \cite{J_PC:2013, J_LZC:2013_a}.

\subsection{Problem Formulation}
In this paper, we consider two performance metrics for the SWIPT system, which are the average sum-capacity for WIT and the average harvested power of individual Rxs for WET. For convenience, we define $\mv{p}_I(\nu)=(p_1^I(\nu),p_2^I(\nu))$ as the power allocation vector to the information signals for the two Txs at fading state $\nu$. It is worth noting that at one particular fading state $\nu$, the harvested power and achievable rate given in \eqref{Eq:Energy2} and \eqref{Eq:Rate}, respectively, are functions of $\mv{\rho}(\nu)$, $\mv{p}_I(\nu)$, and/or  $\mv{S}_E(\nu)$.  To characterize the optimal R-E trade-off, we formulate the following problem by jointly optimizing $\mv{\rho}(\nu)$, $\mv{p}_I(\nu)$, and $\mv{S}_E(\nu)$.
\begin{align}
\mathrm{(P1)}:~\mathop{\mathtt{Maximize}}_{\{\mv{\rho}(\nu),\mv{p}_I(\nu),\mv{S}_E(\nu)\}} &~~  \E_\nu\l[R_1(\nu) + R_2(\nu)\r] \\
\mathtt{subject\; to}\quad&~~  \E_\nu\l[Q_k(\nu)\r] \geq \bar{Q}_k \label{Ineq:P1}, \;k=1,2\\
&~~\mv{\rho}(\nu) \in \cM , \;\forall \nu \label{Constraint:P1}\\
&~~\{\mv{p}_I(\nu),\mv{S}_E(\nu)\} \in \cF, \;\forall \nu, \nn
\end{align}
where $\cF$ is the feasible set for $\{\mv{p}_I(\nu),\mv{S}_E(\nu)\}$, defined as
\begin{align} \label{Eq:Feasible}
\cF &= \{\mv{p}_I(\nu),\mv{S}_E(\nu): \mv{S}_E(\nu)\succeq\mv{0}, p_k^I(\nu)\geq 0, \nn\\
&[\mv{S}_E(\nu)]_{k,k}=p_k^E(\nu) \geq 0, p_k^I(\nu)+p_k^E(\nu) \leq\Pmax, k=1,2\},
\end{align}
and $\bar{Q}_k$ is the average harvested power requirement for Rx $k$. By solving problem (P1), the network coordinator obtains the optimal operation modes $\mv{\rho}(\nu)$ for the two users as well as the corresponding optimal transmit power allocation $\mv{p}_I(\nu)$ and energy beamforming matrix $\mv{S}_E(\nu)$ at the two Txs at each fading state $\nu$. We refer to this cooperation scheme for SWIPT as \emph{full cooperation} (FC).

It is worth noting that problem (P1) is in general non-convex, since the objective function is non-concave over $\mv{p}_I(\nu)$, and furthermore the constraints in \eqref{Ineq:P1} and \eqref{Constraint:P1} are in general non-convex due to the binary variables for mode switching.  However, under the assumption that the fading channel distribution is continuous over $\nu$, it can be shown that   strong duality still approximately holds for (P1), since this problem satisfies the so-called \emph{time-sharing} condition \cite{J_YL:2006}. As a result, we can apply the Lagrange duality method to solve (P1) optimally, for which the detail is given next.

\subsection{Optimal Solution} \label{Section:OptimalSolution_Cooperative}
In this subsection, we study the optimal solution of problem (P1) with the FC scheme. First, the Lagrangian of (P1) is formulated as
\begin{align*} 
&\cL^\FC(\mv{\rho}(\nu),\mv{p}_I(\nu),\mv{S}_E(\nu),\mu_1,\mu_2) = \E_\nu\l[R_1(\nu) + R_2(\nu)\r]  +\nn\\ & \quad\mu_1\l(\E_\nu\l[Q_1(\nu)\r]-\bar{Q}_1\r)+\mu_2\l(\E_\nu\l[Q_2(\nu)\r]-\bar{Q}_2\r),
\end{align*}
where $\mu_1,\mu_2 \geq 0$ are the dual variables associated with the constraints in \eqref{Ineq:P1} for $k=1,2$, respectively. Then, the Lagrange dual function of (P1) is given by
\begin{align} \label{Dual Problem}
&g^\FC(\mu_1,\mu_2) = \nn\\ 
&\max_{\mv{\rho}(\nu)\in \cM,  \{\mv{p}_I(\nu),\mv{S}_E(\nu)\}\in\cF}\cL^\FC(\mv{\rho}(\nu),\mv{p}_I(\nu),\mv{S}_E(\nu),\mu_1,\mu_2).
\end{align}
The resulting dual problem of (P1) is thus given as follows.
\begin{align*}
\mathrm{(D1)}:~\mathop{\mathtt{Maximize}}_{\mu_1,\mu_2} &~~ g^\FC(\mu_1,\mu_2)   \\
\mathtt{subject \; to}&~~  \mu_1\geq 0,\mu_2 \geq 0.  
\end{align*}

The maximization problem in \eqref{Dual Problem} is for obtaining the dual function, which can be efficiently solved by considering a set of subproblems all having the same structure and each for one particular fading state $\nu$. For one particular fading state $\nu$, the associated subproblem is expressed as
\begin{equation} \label{Subproblem}
\max_{\mv{\rho}\in \cM,\{\mv{p}_I,\mv{S}_E\}\in\cF} f_\nu^\FC(\mv{\rho},\mv{p}_I,\mv{S}_E),
\end{equation}
where by discarding some irrelevant constant terms in $\cL^\FC(\cdot)$, we have
\begin{equation} \label{Eq:f_FC}
f_\nu^\FC(\mv{\rho},\mv{p}_I,\mv{S}_E)= R_1 + R_2 + \mu_1 Q_1 + \mu_2 Q_2.
\end{equation}
Note that the fading state index $\nu$ has been omitted in the above formulation for brevity. Problem \eqref{Dual Problem} can thus be solved by solving parallel problems in \eqref{Subproblem} for different  fading states, given $\mu_1$ and $\mu_2$. It is then observed that problem \eqref{Subproblem} for each fading state $\nu$ can be solved by first finding the optimal solution, denoted by $\bar{\mv{p}}_I$ and $\bar{\mv{S}}_E$, to maximize $f_\nu^\FC(\mv{\rho},\mv{p}_I, \mv{S}_E)$ in \eqref{Eq:f_FC} with each given $\mv{\rho}\in\cM$, and then by comparing the resulting values of $f_\nu^\FC(\mv{\rho},\bar{\mv{p}}_I,\bar{\mv{S}}_E)$ over $\mv{\rho}\in\cM$ to obtain the optimal solution for $\mv{\rho}$, denoted by $\mv{\rho}^\star$, i.e., $\mv{\rho}^\star=\arg \max_{\mv{\rho}\in M}f_\nu^{\FC}(\mv{\rho},\bar{\mv{p}}_I,\bar{\mv{S}}_E)$. Finally, with the obtained $\mv{\rho}^\star$, the corresponding optimal solution for $\mv{p}_I$ and $\mv{S}_E$ of problem \eqref{Subproblem} can be found, denoted by $\mv{p}_I^\star$ and $\mv{S}_E^\star$, respectively. In the following, we solve problem \eqref{Subproblem} for different modes of $\mv{\rho}\in\cM$.

\subsubsection{Mode $(\EH,\EH)$} \label{Subsection:a}
Consider first the case of $\mv{\rho} = (0,0)$. According to \eqref{Eq:Rate}, we have $R_1=R_2=0$. Note that for this case, we can easily have $\bar{p}_1^I=\bar{p}_2^I=0$, since the two Txs do not send information. It thus follows from \eqref{Eq:Energy2} that  $Q_1 = \mv{h}_1\mv{S}_E\mv{h}_1^H$ and $Q_2 = \mv{h}_2\mv{S}_E\mv{h}_2^H$. It can then be shown that problem \eqref{Subproblem} in this case is equivalent to the following problem.
\begin{align}
\mathrm{(P1.1)}:~\mathop{\mathtt{Maximize}}_{\mv{S}_E} &~~  \mu_1 \mv{h}_1\mv{S}_E\mv{h}_1^H + \mu_2 \mv{h}_2\mv{S}_E\mv{h}_2^H \\
\mathtt{subject\; to}&~~  {\rm Tr}(\mv{I}_1\mv{S}_E) \leq \Pmax \label{Ineq:P3.1_1}\\
&~~  {\rm Tr}(\mv{I}_2\mv{S}_E) \leq \Pmax \label{Ineq:P3.1_2}\\
&~~ \mv{S}_E\succeq\mv{0}, \nn
\end{align}
where $\mv{I}_1$ and $\mv{I}_2$ are defined as 
\begin{equation*}
\mv{I}_1=\l[\begin{array}{cc} 1 & 0 \\ 0 & 0  \end{array}\r], \quad
\mv{I}_2=\l[\begin{array}{cc} 0 & 0 \\ 0 & 1  \end{array}\r].
\end{equation*}
It can be shown that (P1.1) is a semidefinite program (SDP) and thus can be solved efficiently via existing software, e.g., CVX \cite{CVX}. However, in the following proposition we present a closed-form solution to (P1.1) to provide further insight. 
\begin{proposition} \label{Proposition:P1}
The optimal solution to (P1.1), denoted by $\bar{\mv{S}}_E$, is given by
\begin{equation} \label{Eq:OptS}
\bar{\mv{S}}_E=\Pmax\l[\begin{array}{cc} 1 & \alpha  \\ \frac{1}{\alpha}  & 1  \end{array}\r],
\end{equation}
where $\alpha = \tilde{\mv{h}}_1^H \tilde{\mv{h}}_2/\l|\tilde{\mv{h}}_1^H \tilde{\mv{h}}_2\r| = \l|\tilde{\mv{h}}_1^H \tilde{\mv{h}}_2\r|/\tilde{\mv{h}}_2^H \tilde{\mv{h}}_1$, with $\tilde{\mv{h}}_{k} = [\sqrt{\mu_1} h_{1k}, \sqrt{\mu_2} h_{2k}]^T$, $k=1,2$.
\end{proposition}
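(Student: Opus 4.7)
The plan is to recast (P1.1) as a trace maximization over a $2\times 2$ Hermitian positive semidefinite matrix, use monotonicity in the diagonal entries to tighten the per-transmitter power constraints, and then solve the remaining scalar problem in the off-diagonal entry by phase alignment.

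Concretely, I would first rewrite the objective as ${\rm Tr}(\mv{A}\mv{S}_E)$, where $\mv{A}=\mu_1\mv{h}_1^H\mv{h}_1+\mu_2\mv{h}_2^H\mv{h}_2\succeq\mv{0}$ (the positive semidefiniteness uses $\mu_1,\mu_2\geq 0$ from dual feasibility in (D1)). A direct computation shows the entries of $\mv{A}$ are $[\mv{A}]_{i,j}=\tilde{\mv{h}}_i^H\tilde{\mv{h}}_j$, with $\tilde{\mv{h}}_k$ as in the statement. Writing $\mv{S}_E$ through $s_{11},s_{22}\geq 0$ and $s_{12}$ (with $s_{21}=s_{12}^*$), the constraints reduce to $s_{11},s_{22}\leq\Pmax$ together with the PSD condition $|s_{12}|^2\leq s_{11}s_{22}$, and the objective becomes
\begin{equation*}
\|\tilde{\mv{h}}_1\|^2 s_{11}+\|\tilde{\mv{h}}_2\|^2 s_{22}+2\,\mathrm{Re}\bigl\{(\tilde{\mv{h}}_1^H\tilde{\mv{h}}_2)\,s_{12}^*\bigr\}.
\end{equation*}
Since the coefficients of $s_{11}$ and $s_{22}$ are nonnegative and increasing either $s_{kk}$ only enlarges the feasible set for $s_{12}$ through the PSD constraint, I can take $s_{11}=s_{22}=\Pmax$ without loss of optimality.

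It then remains to maximize $2\,\mathrm{Re}\{(\tilde{\mv{h}}_1^H\tilde{\mv{h}}_2)\,s_{12}^*\}$ over $|s_{12}|\leq\Pmax$, which is a textbook complex phase-alignment problem; the unique maximizer (when $\tilde{\mv{h}}_1^H\tilde{\mv{h}}_2\neq 0$) is $s_{12}=\Pmax\cdot\tilde{\mv{h}}_1^H\tilde{\mv{h}}_2/|\tilde{\mv{h}}_1^H\tilde{\mv{h}}_2|=\Pmax\alpha$, giving precisely the $(1,2)$-entry of \eqref{Eq:OptS}. Because $|\alpha|=1$, Hermiticity forces $[\bar{\mv{S}}_E]_{2,1}=\Pmax\alpha^*=\Pmax/\alpha$, which simultaneously reconciles the two equivalent expressions for $\alpha$ stated in the proposition; and $\det\bar{\mv{S}}_E=\Pmax^2(1-|\alpha|^2)=0$ confirms $\bar{\mv{S}}_E\succeq\mv{0}$ and exhibits it as a rank-one (single-beam) solution.

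I do not anticipate a substantive obstacle — the whole argument is monotonicity plus a one-dimensional phase alignment — but one housekeeping case warrants mention: when $\tilde{\mv{h}}_1^H\tilde{\mv{h}}_2=0$, the symbol $\alpha$ is undefined, the cross-term vanishes identically, and every $s_{12}$ with $|s_{12}|\leq\Pmax$ is optimal, so \eqref{Eq:OptS} should then be read as picking one admissible representative (e.g.\ $s_{12}=0$); the optimal value and the rank-one beamforming interpretation are unaffected.
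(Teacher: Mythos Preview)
Your argument is correct but takes a genuinely different route from the paper. The paper proceeds through Lagrange duality: it forms the Lagrangian with multipliers $\lambda_1,\lambda_2$ for the per-Tx power constraints, derives and solves the dual problem explicitly via the Schur complement and the AM--GM inequality to obtain $\lambda_k^\star=\|\tilde{\mv{h}}_k\|^2+|\tilde{\mv{h}}_1^H\tilde{\mv{h}}_2|$, then invokes complementary slackness to force both diagonal entries of $\bar{\mv{S}}_E$ to $\Pmax$, and finally extracts the off-diagonal entry from the KKT stationarity equations. Your approach is purely primal and more elementary: you note that the objective is nondecreasing in each diagonal entry while increasing a diagonal only relaxes the PSD constraint, so the diagonals can be pushed to $\Pmax$ directly, leaving a one-complex-variable phase-alignment problem for $s_{12}$. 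Your route is shorter and avoids the duality machinery altogether; the paper's route, by contrast, delivers the optimal dual variables and the optimal value $(\|\tilde{\mv{h}}_1\|^2+\|\tilde{\mv{h}}_2\|^2+2|\tilde{\mv{h}}_1^H\tilde{\mv{h}}_2|)\Pmax$ as by-products and exposes the KKT structure, which is the template the paper later reuses for the $K$-user SDP (P4.1) where the primal reduction to a single scalar variable no longer applies.
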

\begin{proof}
See Appendix~\ref{Proof:Proposition:P1}.
 \end{proof}
 
From \eqref{Eq:OptS}, it is observed that the two Txs should both transmit with maximum power $\Pmax$, and furthermore the optimal transmit covariance can be expressed as  $\bar{\mv{S}}_E = \Pmax\mv{v}\mv{v}^H$ where $\mv{v}=[1,1/\alpha]^T$ is the  beamforming vector, since we have $\alpha^*=1/\alpha$ and $|\alpha|=1$. In other words, $\bar{\mv{S}}_E$ is of rank-one, i.e., only one single energy beam is used for collaborative energy beamforming at the two Txs. As a result, Tx $1$ and Tx $2$ only need to store one common pseudo-random energy signal and the network coordinator only needs to send the phase of $\alpha$ (a real scalar between $0$ and $2\pi$) to Tx $2$ to implement the optimal collaborative energy beamforming.

\subsubsection{Mode $(\ID,\EH)$ or $(\EH,\ID)$} \label{Subsection:b}
Next, consider the case of $\mv{\rho} = (1,0)$, where similar results can be obtained for the case of $\mv{\rho}=(0,1)$ and thus are omitted. For mode $(\ID,\EH)$ with $\mv{\rho}=(1,0)$, according to \eqref{Eq:Energy2} and \eqref{Eq:Rate}, we have $Q_1=0$ and $R_2=0$. Note that for this case, it easily follows that $\bar{p}_2^I=0$, since Tx $2$ does not transmit information. As a result, according to \eqref{Eq:Energy2} and \eqref{Eq:Rate}, we have $R_1 = \log_2\l(1+\frac{|h_{11}|^2 p_1^I}{\sigma_1^2}\r) $ and $Q_2 =|h_{21}|^2p_1^I +  \mv{h}_2\mv{S}_E\mv{h}_2^H$. Then problem \eqref{Subproblem} in this case can be expressed as
\begin{align}
\label{P1.2}\mathrm{(P1.2)}:~\mathop{\mathtt{Maximize}}_{\{p_1^I,\mv{S}_E\}} &~~  \log_2\l(1+\frac{|h_{11}|^2 p_1^I}{\sigma_1^2}\r) +\nn\\& \mu_2\l(|h_{21}|^2p_1^I +  \mv{h}_2\mv{S}_E\mv{h}_2^H\r) \\
\mathtt{subject\; to}&~~\{p_1^I,\mv{S}_E\}\in\cF. \nn
\end{align}
To solve (P1.2), first it can be shown that $\bar{p}_1^I + \bar{p}_1^E = \Pmax$  and $\bar{p}_2^E=\Pmax$ should hold for (P1.2), where from \eqref{Eq:Feasible} we have $\bar{p}_k^E=[\bar{\mv{S}}_E]_{k,k}$, $k=1,2$. In other words, the two Txs should both transmit with maximum power $\Pmax$, since the energy signals from both Txs can be canceled at Rx $1$ and thus it is desirable for the two Txs to transmit their maximum power. With $\bar{p}_1^E = \Pmax-\bar{p}_1^I$ and $\bar{p}_2^E=\Pmax$ at hand, it can be shown that (P1.2) is a special case of (P1.1) with $\mu_1=0$. With $\mu_1=0$ in Proposition~\ref{Proposition:P1}, the optimal $\mv{S}_E$ for (P1.2) can be expressed as $\bar{\mv{S}}_E = \mv{u}\mv{u}^H$, where
\begin{equation} \label{Eq:EB_P1.2}
\mv{u} = \l[\sqrt{\Pmax-p_1^I},\sqrt{\Pmax}\frac{h_{21}^*h_{22}}{|h_{21}^*h_{22}|}\r]^T.
\end{equation}
 To determine the optimal $p_1^I$, i.e., $\bar{p}_1^I$, we substitute $\bar{\mv{S}}_E = \mv{u}\mv{u}^H$ into \eqref{P1.2}, and then (P1.2) reduces to the following problem.
\begin{align}
\label{P1.2'}~\mathop{\mathtt{Maximize}}_{p_1^I} &~~ y(p_1^I)   \\
\mathtt{subject \; to}&~~  0 \leq p_1^I \leq \Pmax,  \label{Ineq:P4.1} \nn
\end{align}
where from \eqref{P1.2} $y(p_1^I)$ is defined as
{\small \begin{equation*}
 y(p_1^I) = \log_2\l(1+\frac{|h_{11}|^2 p_1^I}{\sigma_1^2}\r) +  \qquad\qquad\qquad\qquad\qquad\qquad
\end{equation*} 
\begin{equation*}
 \mu_2\l( |h_{21}|^2 \Pmax + |h_{22}|^2 \Pmax + 2|h_{21}^* h_{22}|\sqrt{(\Pmax-p_1^I)\Pmax}\r).
\end{equation*} }
It can be shown that $y(p_1^I)$ is a concave function of  $p_1^I$ for $0\leq p_1^I\leq\Pmax$; hence, the optimal solution to problem \eqref{P1.2'} can be efficiently obtained by e.g., Newton's method \cite{B_BV:2004}. Thus, $\bar{\mv{S}}_E$ is obtained.

Since $\bar{\mv{S}}_E$ is of rank-one in this case, similar to the case of mode $(\EH,\EH)$, only one energy beam is needed for collaborative energy beamforming at the two Txs. It is worth noting that in this case, according to \eqref{Eq:EB_P1.2}, the network coordinator needs to send the optimal power allocation for the transmitted information signal to Tx $1$, i,e., $\bar{p}_1^I$ by solving problem \eqref{P1.2'}, and the phase of $h_{21}^*h_{22}$ to Tx $2$ to implement collaborative energy beamforming for WET to Rx $2$.

\subsubsection{Mode $(\ID,\ID)$}  \label{Subsection:d}
Finally, consider the case of $\mv{\rho} = (1,1)$. According to \eqref{Eq:Energy2}, we have $Q_1=Q_2=0$. Note that in this case we can easily have $\bar{p}_1^E=\bar{p}_2^E=0$, and thus $\bar{\mv{S}}_E=\mv{0}$, since the two Txs do not transmit energy signals. It thus follows from \eqref{Eq:Rate} that $R_1= \log_2\l(1+\frac{|h_{11}|^2 p_1^I}{|h_{12}|^2 p_2^I + \sigma_1^2}\r)$ and $R_2= \log_2\l(1+\frac{|h_{22}|^2 p_2^I}{|h_{21}|^2 p_1^I + \sigma_2^2}\r)$, and hence  problem \eqref{Subproblem} reduces to 
\begin{align*}
\mathrm{(P1.3)}:~\mathop{\mathtt{Maximize}}_{\mv{p}_I} &~~    \log_2\l(1+\frac{|h_{11}|^2 p_1^I}{|h_{12}|^2 p_2^I + \sigma_1^2}\r) +\nn\\& \log_2\l(1+\frac{|h_{22}|^2 p_2^I}{|h_{21}|^2 p_1^I + \sigma_2^2}\r) \\
\mathtt{subject\; to}&~~0\leq p_k^I \leq \Pmax,\quad k=1,2. 
\end{align*}
(P1.3) is a non-convex problem. However, it has been shown in \cite{J_GGOK:2008} that  on-off power control is optimal for this problem. Specifically, by defining $\cP^\star = \{(0,\Pmax), (\Pmax,0), (\Pmax, \Pmax) \}$, then the optimal solution to (P1.3), denoted by $\bar{\mv{p}}_I$, is given by
\begin{align} \label{Eq:Optimal_P1.1}
\bar{\mv{p}}_I =& \arg\max_{\mv{p}_I\in\cP^\star} \log_2\l(1+\frac{|h_{11}|^2 p_1^I}{|h_{12}|^2 p_2^I + \sigma_1^2}\r) + \nn\\ & \qquad\quad\log_2\l(1+\frac{|h_{22}|^2 p_2^I}{|h_{21}|^2 p_1^I + \sigma_2^2}\r).
\end{align}

It is worth noting that, as can be observed from \eqref{Eq:Optimal_P1.1}, the network coordinator only needs to send an on/off (binary) signal to each Tx in this case, according to the optimal power solution for \eqref{Eq:Optimal_P1.1}. 

To summarize, with a given pair of $\mu_1$ and $\mu_2$, problem \eqref{Subproblem} has been efficiently solved for different operation modes of $\mv{\rho}\in\cM$. Then, problem \eqref{Subproblem} is solved for each fading state $\nu$, by finding the mode $\mv{\rho}$ that maximizes $f^\FC(\mv{\rho},\bar{\mv{p}}_I, \bar{\mv{S}}_E)$ defined in \eqref{Eq:f_FC}. Then, the sub-gradient based method such as ellipsoid method \cite{B_BV:2004} can be applied to iteratively search for the optimal dual solution, defined by $\mu_1^\star$ and $\mu_2^\star$, for problem (D1). The sub-gradient for updating $(\mu_1,\mu_2)$ can be shown to be $(\E_\nu[Q_1(\mv{\rho}^\star(\nu),\mv{p}_I^\star(\nu),\mv{S}_E^\star(\nu))]-\bar{Q}_1, \E_\nu[Q_2(\mv{\rho}^\star(\nu),\mv{p}_I^\star(\nu),\mv{S}_E^\star(\nu))]-\bar{Q}_2)$. Thus, (P1) is solved completely.

\subsection{Suboptimal Schemes} \label{Section:Benchmark}
In this subsection, we introduce two suboptimal solutions to problem (P1) based on existing schemes in \cite{J_PC:2013} and \cite{J_LZC:2013_a}, namely \emph{partial cooperation} and \emph{no cooperation}, respectively, for comparison with our proposed FC scheme.

\subsubsection{Partial Cooperation} \label{Subsection:Partially-Cooperative}
In this scheme, there is no signal splitting applied at each Tx and the transmitted signal at each Tx is only information signal, i.e., $x_k(n) = x_k^I(n)$, $k=1,2$. As a result, collaborative energy beamforming cannot be applied, where the two Txs cooperate by only jointly determining the power allocation (i.e., $\mv{p}_I(\nu)=(p_1^I(\nu),p_2^I(\nu))$) and the Rx operation modes (i.e., $\mv{\rho}(\nu)=(\rho_1(\nu),\rho_2(\nu))$) at each fading state $\nu$. From \eqref{Eq:Energy2}, the harvested power at Rx $k$ at fading state $\nu$ is thus given by
\begin{align} \label{Eq:EnergyC}
Q_k(\nu) = (1-\rho_k(\nu))(|h_{kk}(\nu)|^2 p^I_k(\nu) + &|h_{k\bar{k}}(\nu)|^2 p_{\bar{k}}^I(\nu)), \nn\\ & k=1,2,
\end{align}
where $\bar{k}:=\{1,2\}\backslash\{k\}$. Next, from \eqref{Eq:Rate}, the achievable rate at Rx $k$ at fading state $\nu$ is given by

\begin{equation}\label{Eq:RateC}
R_k(\nu) = \rho_k(\nu)\log_2\l(1+\frac{|h_{kk}(\nu)|^2 p_k^I(\nu)}{|h_{k\bar{k}}(\nu)|^2 p_{\bar{k}}^I(\nu) + \sigma_k^2}\r), \;k = 1,2.
\end{equation} 
Since we have $p_1^E=p_2^E=0$ in this scheme, $\mv{S}_E=\mv{0}$ and thus \eqref{Eq:Feasible} is simplified as 
\begin{equation*}
\cP = \{\mv{p}_I(\nu): 0\leq p_k^I(\nu)\leq \Pmax, k=1,2\}.
\end{equation*}
It then follows that problem (P1) is reduced to the following problem in the case of partial cooperation (PC).
\begin{align}
\mathrm{(P2)}:~\mathop{\mathtt{Maximize}}_{\{\mv{\rho}(\nu),\mv{p}_I(\nu)\}} &~~  \E_\nu\l[R_1(\nu) + R_2(\nu)\r] \nn \\
\mathtt{subject\; to}&~~  \E_\nu\l[Q_k(\nu)\r] \geq \bar{Q}_k , \;k=1,2 \label{Ineq:P2}\\
&~~\mv{\rho}(\nu) \in \cM , \;\forall \nu \nn\\
&~~\mv{p}_I(\nu)\in\cP, \;\forall \nu. \nn
\end{align}

Similar to (P1), problem (P2) can be decoupled into parallel subproblems each for one fading state $\nu$ and expressed as (by omitting the fading state $\nu$)
\begin{equation} \label{Eq:Subproblem2}
\max_{\mv{\rho}\in \cM,\mv{p}_I\in\cP} f_\nu^\PC(\mv{\rho},\mv{p}_I),
\end{equation}
where 
\begin{equation} \label{Eq:f_PC}
f_\nu^\PC(\mv{\rho},\mv{p}_I)= R_1 + R_2 + \mu_1 Q_1 + \mu_2 Q_2, 
\end{equation}
with $\mu_1,\mu_2\geq 0$ denoting the dual variables associated with the constraints in \eqref{Ineq:P2} for $k=1,2$, respectively. Problem \eqref{Eq:Subproblem2} can then be solved by first finding the optimal solution, denoted by $\bar{\mv{p}}_I$, that maximizes $f_\nu^\PC(\mv{\rho},\mv{p}_I)$ in \eqref{Eq:f_PC} with given $\mv{\rho}\in\cM$, and then searching $\mv{\rho}$ that maximizes $f_\nu^\PC(\mv{\rho},\bar{\mv{p}}_I)$ over $\mv{\rho}\in\cM$. Similar to problem \eqref{Subproblem}, (P2) is then solved by searching the optimal dual solution $(\mu_1^\star,\mu_2^\star)$ based on the ellipsoid method. Therefore, in the following we focus on solving \eqref{Eq:Subproblem2} with given $\mv{\rho}\in\cM$. 

\begin{itemize}
\item Mode $(\EH,\EH)$: In this case, $\mv{\rho}=(0,0)$. According to \eqref{Eq:EnergyC} and \eqref{Eq:RateC}, problem \eqref{Eq:Subproblem2} in this case is expressed as
\begin{align*}
\mathrm{(P2.1)}:~\mathop{\mathtt{Maximize}}_{\mv{p}_I} &~~   \mu_1(|h_{11}|^2p_1^I+|h_{12}|^2p_2^I) +\nn\\ &  \mu_2(|h_{21}|^2p_1^I+|h_{22}|^2p_2^I) \\
\mathtt{subject\; to}&~~\mv{p}_I\in\cP.
\end{align*}
It can be observed that the optimal solution of (P2.1) is given by $\bar{\mv{p}}_I = (\Pmax,\Pmax)$.

\item Mode $(\ID,\EH)$: In this case, $\mv{\rho}=(1,0)$. Similar analysis can be made for mode $(\EH,\ID)$ with $\mv{\rho}=(0,1)$, and thus is omitted. According to \eqref{Eq:EnergyC} and \eqref{Eq:RateC},  problem \eqref{Eq:Subproblem2} in this case is expressed as
\begin{align}
\label{P2.2}\mathrm{(P2.2)}:~\mathop{\mathtt{Maximize}}_{\mv{p}_I} &~~  \log_2\l(1+\frac{|h_{11}|^2 p_1^I}{|h_{12}|^2 p_2^I + \sigma_1^2}\r) + \nn\\ &\mu_2 (|h_{21}|^2 p_1^I + |h_{22}|^2 p_2^I) \\
\mathtt{subject\; to}&~~\mv{p}_I\in\cP. \nn
\end{align}
Note that the objective function of the above problem is non-concave over $p_1^I$ and $p_2^I$; thus, problem (P2.2) is not convex. However, this problem can be efficiently solved as follows. First, it can be observed that \eqref{P2.2} monotonically increases with $p_1^I$; thus, we obtain $\bar{p}_1^I = \Pmax$ for problem (P2.2). Next, with $\bar{p}_1^I = \Pmax$, it can be shown that \eqref{P2.2} is a convex function over $0\leq p_2^I \leq \Pmax$; thus, the optimal solution of $p_2^I$, i.e., $\bar{p}_2^I$, is either $0$ or $\Pmax$, from which we simply select the one resulting in the larger function value of \eqref{P2.2}. To summarize, the optimal solution to problem (P2.2) is in the set, $\bar{\mv{p}}_I \in \{(\Pmax, 0), (\Pmax, \Pmax)\}$.

\item Mode $(\ID,\ID)$: Finally, consider the case of $\mv{\rho}=(1,1)$. According to \eqref{Eq:EnergyC} and \eqref{Eq:RateC}, it can be shown that problem \eqref{Eq:Subproblem2} in this case reduces to (P1.3), for which the same on/off solution given in \eqref{Eq:Optimal_P1.1} applies.

\end{itemize}

Based on the above results, it can be inferred that in this case, for each mode, the network coordinator only needs to send an on/off control signal to each of the two Txs since if any of them is switched on, it should transmit with maximum power $\Pmax$.

\subsubsection{No Cooperation}
For another benchmark scheme, we consider the case when there is no cooperation at the two Txs, and as a result the Rxs perform mode switching independently based on their own observed CSI, thus referred to as \emph{no cooperation} (NC). It is worth noting that under this setup, the operation of each Tx-Rx link is equivalent to the point-to-point SWIPT system subject to time-varying co-channel interference which is studied in \cite{J_LZC:2013_a}. In this case, we assume that each Tx sends the  information signal only to its corresponding Rx with the maximum power $\Pmax$, over all the fading states, i.e., $x_k(n)=x_k^I(n)$, $k=1,2$, where $x_k^I(n)\sim\cC\cN(0,\Pmax)$. According to \eqref{Eq:Energy2} and \eqref{Eq:Rate}, the harvested power and achievable rate at fading state $\nu$ for Rx $k$ are expressed as
\begin{equation} \label{Eq:EnergyNC}
Q_k(\nu) = (1-\rho_k(\nu))\l(|h_{kk}(\nu)|^2  + |h_{k\bar{k}}(\nu)|^2 \r)\Pmax, \;k = 1,2,
 \end{equation}
\begin{equation} \label{Eq:RateNC}
R_k(\nu) = \rho_k(\nu)\log_2\l(1+\frac{|h_{kk}(\nu)|^2 \Pmax}{|h_{k\bar{k}}(\nu)|^2 \Pmax + \sigma_k^2}\r),  \; k = 1,2.
\end{equation}

It then follows that problem (P1) is reduced to the following problem with Rx mode switching variables only:
\begin{align}
\mathrm{(P3)}:~\mathop{\mathtt{Maximize}}_{\{\rho_1(\nu),\rho_2(\nu)\}} &~~  \E_\nu\l[R_1(\nu)+R_2(\nu)\r] \nn\\
\mathtt{subject\; to}&~~  \E_\nu\l[Q_k(\nu)\r] \geq \bar{Q}_k,  \; k=1,2\label{Ineq:P3}\\
&~~\rho_k(\nu) \in \{0,1\}, \;\forall \nu, k=1,2. \nn
\end{align}

Similar to (P1) and (P2), problem (P3) can be decoupled into subproblems each for one particular fading state and expressed as (by omitting the fading state $\nu$)
\begin{equation} \label{Eq:SubproblemNon}
\max_{\rho_1,\rho_2\in\{0,1\}} f_\nu^\NC(\rho_1,\rho_2),
\end{equation}
where 
\begin{equation} \label{Eq:Lagrangian_Non}
f\nu^\NC(\rho_1,\rho_2) = R_1 + R_2 + \mu_1 Q_1 + \mu_2 Q_2,
\end{equation}
with $\mu_1,\mu_2\geq 0$ denoting the dual variables associated with the constraints in \eqref{Ineq:P3} for $k=1,2$, respectively. Note that problem \eqref{Eq:SubproblemNon} can be solved by separately optimizing $\rho_1\in\{0,1\}$ and $\rho_2\in\{0,1\}$ by Rx $1$ and Rx $2$, respectively. According to \eqref{Eq:EnergyNC} and \eqref{Eq:RateNC}, the optimal solution to problem \eqref{Eq:SubproblemNon} is given by \cite{J_LZC:2013_a}
\begin{align*} 
\rho_k^\star = \l\{\begin{aligned}
&1,  \qquad\mbox{if}\; \log_2\l(1+\frac{|h_{kk}|^2\Pmax}{|h_{k\bar{k}}|^2 \Pmax + \sigma_k^2}\r) >\\ &\qquad\qquad\qquad\qquad \mu_k(|h_{kk}|^2  + |h_{k\bar{k}}|^2)\Pmax, \\
&0,  \qquad\mbox{otherwise},
\end{aligned}
\r.
\end{align*}
for $k=1,2$. Finally, (P3) can be solved by finding the optimal dual solution $(\mu_1^\star,\mu_2^\star)$, which can be determined by Rx $1$ and Rx $2$, respectively, by a simple bisection search.

Note that in this scheme, since there is no Tx-side cooperation, the network coordinator is not needed, which reduces the system complexity as compared to the other two cases of full and partial cooperation.

\section{Collaborative Transmission for SWIPT: $K$-User Case} \label{Section:Multi-user}
In this section, we study the general $K$-user SWIPT system with $K>2$. Similar to problem (P1), we can formulate the problem to maximize the average sum-capacity subject to the average harvested power constraint for each Rx. However, to avoid the high complexity of exhaustively searching for the optimal operation modes for all users as well as the corresponding signal splitting and precoding matrix for collaborative energy beamforming (as in Section~\ref{Section:OptimalSolution_Cooperative}) when $K$ becomes large, we propose a suboptimal scheme with lower complexity. This scheme is referred to as \emph{pairwise cooperation}, where we divide the $K$ users into $K/2$ groups (assuming $K$ is even), and then  apply the collaboration schemes obtained for the two-user case to the different groups. Furthermore, for a performance benchmark, we present a baseline scheme that is named as \emph{joint cooperation}, where all the users operate in either ID mode or EH mode synchronously  at each fading state, which is inspired by the principle of ergodic interference alignment introduced in \cite{J_NGJV:2012}. Note that for each scheme, the network coordinator is needed to coordinate the transmission of $K$ users.

\subsection{Pairwise Cooperation} \label{Subsection:Pairwise-cooperation}
First, we consider the pairwise cooperation based on the transmit cooperation schemes proposed in Section~\ref{Section:ProblemFormulation} for the two-user SWIPT system. For this scheme, we first divide the $K$ Tx-Rx pairs into $K/2$ groups with each group consisting  of two Tx-Rx pairs, and then apply the FC scheme in Section~\ref{Section:ProblemFormulation} to each group\footnote{For the case when $K$ is odd, we can group $K-1$ users with the proposed grouping scheme, where the remaining Tx-Rx link needs to perform mode switching independently without user pairing. }. 

We first address the key issue on how to group the users given channel conditions to guarantee good performance of the collaborative WIT and WET design. Although we can exhaustively search over the $K(K-1)/2$ possible grouping cases to obtain the one that leads to the best R-E performance, it should be noted that the complexity of such an exhaustive search is very high, i.e., $O(K^2)$ as $K$ becomes large. Thus, a more efficient and practical grouping algorithm is needed. However, intuitively there may be no straightforward solution to this problem, due to the conflicting goals between WIT versus WET. Specifically, for WIT, it is desirable to group the users to be far apart, in order to minimize the interference of both the intra-group and inter-group users; however, for WET, strong interference between the intra-group users is advantageous to achieve higher collaborative energy beamforming gains. In order to strike a balance between WIT and WET, we propose a simple grouping algorithm that  generally results in weak inter-group interference (for WIT), but strong intra-group interference (for collaborative WET). The main advantages of our proposed grouping algorithm is twofold. First, for collaborative WET, if the intra-group interference is strong, we can maximally exploit the collaborative energy beamforming gain within each group. Second, for efficient WIT, it is also expected that the strong intra-group interference could be avoided to certain extent by the opportunistic mode switching from mode $(\ID,\ID)$ to mode $(\ID,\EH)$ (or $(\EH,\ID)$), as well as the Tx-side power control in mode $(\ID,\ID)$.

Our proposed grouping algorithm is implemented as follows. First, we obtain the user indices with the largest average cross-link channel power over all the users in $\cK$, i.e., set $\{m,n\} = \arg\max_{\{k,l\}}\{\E_\nu[|h_{kl}(\nu)|^2]\}_{k,l\in\cK,k\neq l}$, and then group the $m$th and $n$th Tx-Rx pairs to be the first group. Next, we remove the grouped $m$th and $n$th pairs from the user set $\cK$, and repeat the same user selection until all $K$ users are grouped (assuming $K$ is even).

After the grouping, for simplicity, we assume that each group first ignores the inter-group interference to optimize their collaborative transmit signal design based on the FC scheme,  to avoid the complications due to the inter-group interference. However, after the intra-group FC scheme is designed, the actual achievable rate or harvested power for each grouped user pairs is computed by taking into account the inter-group interference for the sake of completeness.

\subsection{Joint Cooperation Based on Ergodic Interference Alignment} \label{Subsection:K-user Cooperation}
Next, we provide an alternative transmit cooperation design for the $K$-user SWIPT system based on the  \emph{ergodic interference alignment} (E-IA) \cite{J_NGJV:2012}, as a benchmark scheme for our proposed pairwise cooperation scheme. The main idea of E-IA is as follows. Given any fading state $\nu$, we define its \emph{complementary fading state} $\nu_C$ such that $h_{kl}(\nu) =  h_{kl}(\nu_C)$ if $k=l$ and  $h_{kl}(\nu) = - h_{kl}(\nu_C)$ if $k\neq l$, $k,l\in\cK$. It was shown in \cite{J_NGJV:2012} that we can obtain interference-free transmission of the $K$ links if all Txs send the same signals at fading state $\nu$ as well as at the  complementary fading state $\nu_C$ that appears in future. Assuming ideal channel quantization and no transmission delay constraint for the purpose of theoretical investigation, each Tx $k$ achieves the average rate $\frac{1}{2}\E_\nu[\log_2(1+2|h_{kk}(\nu)|^2 \Pmax/\sigma_k^2)]$, $k\in\cK$, for WIT with maximum transmit power $\Pmax$. It should be pointed out that perfect E-IA is difficult to achieve in practice due to the required infinitely long transmission delay to achieve half of the interference-free capacity; thus, we consider the scheme based on E-IA as a baseline scheme against which the performance of the proposed pairwise cooperation scheme is compared. 

Our proposed joint cooperation for the $K$-user SWIPT system based on the E-IA is then described as follows. At each fading state $\nu$, we assume that all users operate in either ID mode or EH mode. If ID mode is selected, as for E-IA, all Txs send independent information to their corresponding Rxs and will also send the same signals when the complementary fading state $\nu_C$ occurs in future. On the other hand, if EH mode is selected, all Txs cooperatively send energy signals to all Rxs via energy beamforming. At each fading state $\nu$, similar to \eqref{Eq:Mode}, we define an indicator function as 
\begin{equation*}
\rho^\mathsf{IA}(\nu) = \l\{\begin{aligned}
&1, \quad \mbox{ID mode is active}, \\
&0, \quad \mbox{EH mode is active}.
\end{aligned}
\r.
\end{equation*}
Then, for fading state $\nu$, the achievable rate of Rx $k\in\cK$ based on the E-IA is given by
\begin{align} \label{Eq:Rate_IA}
R_k^\mathsf{IA}(\nu) = \rho^\IA(\nu)\frac{1}{2}\log_2\l(1+\frac{2|h_{kk}(\nu)|^2\Pmax}{\sigma_k^2}\r), \; k\in \cK .
\end{align}
On the other hand, similar to \eqref{Eq:Energy2}, the harvested power at Rx $k$ at fading state $\nu$ is expressed as 
\begin{align} \label{Eq:Energy_IA}
Q_k^\IA(\nu) = (1-\rho^\IA(\nu)) \mv{h}_k(\nu)\mv{S}_E(\nu)\mv{h}_k^H(\nu), \; k\in \cK.
\end{align}

Similar to problem (P1), to characterize the resulting R-E performance of the above E-IA based joint cooperation scheme, we formulate the following problem.
\begin{align}
\mathrm{(P4)}:~\mathop{\mathtt{Maximize}}_{\{\rho^\mathsf{IA}(\nu),\mv{S}_E(\nu)\}} &~~ \sum_{k\in\cK}\E_\nu\l[R_k^\mathsf{IA}(\nu)\r]  \nn\\
\mathtt{subject\; to}&~~  \E_\nu\l[Q_k^\mathsf{IA}(\nu)\r] \geq \bar{Q}_k , \;k\in\cK \label{Ineq:P4_2} \\
&~~\rho^\mathsf{IA}(\nu) \in \{0,1\} , \;\forall \nu \nn\\
&~~\mv{S}_E(\nu)\in \mv{\cS}, \;\forall \nu, \nn
\end{align}
where $\mv{\cS}=\{\mv{S}_E(\nu): \mv{S}_E(\nu)\succeq\mv{0}, [\mv{S}_E(\nu)]_{k,k} \leq\Pmax, k\in\cK\}$ denotes the feasible set for $\mv{S}_E(\nu)$ subject to the peak transmit power constraint at each Tx. 

Similar to the two-user case, problem (P4) can be decoupled into subproblems each for one fading state and expressed as (by omitting the fading state $\nu$)
\begin{equation} \label{Eq:Subproblem3}
\max_{\rho^{\IA}\in\{0,1\},\mv{S}_E\in\mv{\cS}} f_\nu^\IA(\rho^{\IA},\mv{S}_E),
\end{equation}
where
\begin{equation} \label{Eq:f_IA}
f_\nu^\IA(\rho^{\IA},\mv{S}_E)= \sum_{k\in\cK} R_k^\IA + \sum_{k\in\cK} \mu_k Q_k^\IA, 
\end{equation}
with $\mu_k\geq 0$, $k\in\cK$, denoting the dual variable associated with the harvested power constraint in \eqref{Ineq:P4_2}. Problem \eqref{Eq:Subproblem3} can be solved by first obtaining the optimal $\mv{S}_E$, denoted by $\bar{\mv{S}}_E$, that maximizes $f_\nu^\IA(\rho^{\IA},\mv{S}_E)$ in \eqref{Eq:f_IA} for a given $\rho^\IA\in\{0,1\}$, and then finding $\rho^\IA\in\{0,1\}$ to maximize $f_\nu^\IA(\rho^{\IA},\bar{\mv{S}}_E)$. First, if $\rho^\IA = 1$, according to \eqref{Eq:Rate_IA}, it follows that with $\bar{\mv{S}}_E = {\rm Diag}(\Pmax,...,\Pmax)$,
\begin{equation} \label{Eq:Lagrange_IA_ID}
f_\nu^\IA(\rho^{\IA}=1,\bar{\mv{S}}_E) = \sum_{k\in\cK}\frac{1}{2} \log_2\l(1+\frac{2|h_{kk}|^2\Pmax}{\sigma_k^2}\r).
\end{equation}
Next, if $\rho^\IA = 0$, according to \eqref{Eq:Energy_IA}, problem \eqref{Eq:Subproblem3} is expressed as
\begin{align}
\label{P4.1}\mathrm{(P4.1)}:~\mathop{\mathtt{Maximize}}_{\mv{S}_E} &~~ \sum_{k\in\cK}\mu_k\mv{h}_k\mv{S}_E\mv{h}_k^H  \\
\mathtt{subject\; to}&~~  {\rm Tr}(\mv{I}_k \mv{S}_E) \leq \Pmax , \;k\in\cK \nn\\
&~~\mv{S}_E \succeq \mv{0}, \nn 
\end{align}
where $\mv{I}_k$ is defined such that $[\mv{I}_k]_{n,m}=1$ if $n=m=k$ and $0$ otherwise.  In fact, problem (P4.1) generalizes problem (P1.1) to the case with $K>2$, which is also a SDP. Although the closed-form solution of (P4.1) cannot be obtained with $K>2$ (unlike (P1.1) in the special case of $K=2$), we can apply existing software e.g., CVX  \cite{CVX} to solve this problem efficiently. 

According to \eqref{Eq:Lagrange_IA_ID} and \eqref{P4.1}, the optimal mode to problem \eqref{Eq:Subproblem3} is obtained  as
\begin{align*} 
\rho^{\IA\star} = \l\{\begin{aligned}
&1,  \qquad\mbox{if}\; \sum_{k\in\cK} \frac{1}{2}\log_2\l(1+\frac{2|h_{kk}|^2\Pmax}{\sigma_k^2}\r) > \\ &\qquad\qquad\qquad\qquad\sum_{k\in\cK}\mu_k\mv{h}_k\bar{\mv{S}}_E\mv{h}_k^H,\\
&0,  \qquad\mbox{otherwise}.
\end{aligned}
\r.
\end{align*}
Thus, given any set of dual variables $\{\mu_k\}$, $k\in\cK$, problem \eqref{Eq:Subproblem3} is efficiently solved. Finally, to find the optimal dual solution $\{\mu_k^\star\}$, $k\in\cK$, similarly as in Section~\ref{Section:OptimalSolution_Cooperative}, the ellipsoid method can be applied. Problem (P4) is thus solved. 

It is worth noting that unlike (P1.1) in the two-user case, in general the optimal solution to (P4.1) is not guaranteed to be of rank one with $K>2$, and thus more than one energy beams may need to be transmitted by the $K$ Txs for achieving the optimal WET, with which the comparison with pairwise cooperation (which adopts only a single energy-beam at all the Txs, as shown in Section~\ref{Section:ProblemFormulation}) may not be fair. To compensate this in some extent, the so-called \emph{randomization} techniques (see, e.g., \cite{J_SDL:2006} and references therein) can be employed to generate good suboptimal rank-one solutions based on the optimal solution of (P4.1) obtained without applying any rank constraint, for which the details are omitted for brevity.

\section{Simulation Results} \label{Section:Numerical}

In this section, we evaluate the performance of the proposed cooperation schemes for SWIPT by simulation. We set the peak transmit power as $\Pmax=20$ dBm or $0.1$ watt (W), the noise power as $\sigma_k^2 = -50$ dBm, and the Rx energy harvesting efficiency as $\zeta = 0.7$. In the following, we first show the results for the two-user SWIPT system, and then present the results for the general $K$-user SWIPT system.

\subsection{Two-User SWIPT System} \label{Subsection:Numerical_EachMode}
\begin{figure}
\centering
\includegraphics[width=8cm]{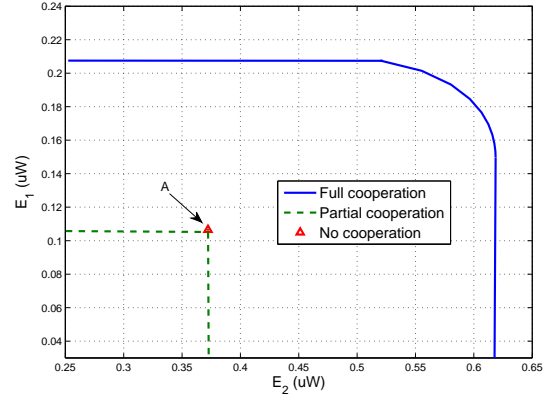}
\caption{Achievable E-E region in the AWGN channel for Mode $(\EH,\EH)$.} 
\label{Fig:E_E_Region}
\end{figure}
\begin{figure}
\centering
\includegraphics[width=8cm]{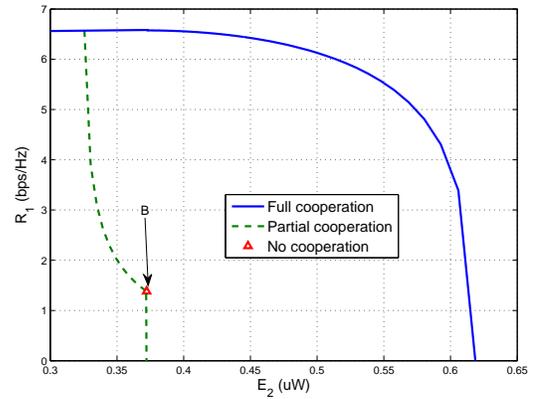}
\caption{Achievable R-E region in the AWGN channel for Mode $(\ID,\EH)$.}  \vspace{-10pt}
\label{Fig:R_E_Region}
\end{figure}

In this subsection, we consider a two-user SWIPT system. First, we show the performance gains for the two modes of $(\EH,\EH)$ and $(\ID,\EH)$ (or $(\EH,\ID)$) assuming an AWGN channel by the proposed full cooperation (FC) scheme with transmit energy beamforming and signal splitting, as compared to the existing partial cooperation (PC) and no cooperation (NC) schemes. By solving problems (P1.1) and (P1.2) with different weights, we obtain the resulting energy-energy (E-E) region and R-E region for the $(\EH,\EH)$ mode and $(\ID,\EH)$ mode, respectively, shown in Figs.~\ref{Fig:E_E_Region} and \ref{Fig:R_E_Region}, respectively. The channels are set as $h_{11} =0.0307e^{j1.7683}$, $h_{12} =0.0241e^{-j2.6973}$, $h_{21} = 0.0349e^{-j1.4011}$, and $h_{22} =0.0258e^{j2.8246}$, assuming an average $30$ dB of signal power attenuation for each pair of Tx and Rx. Notice that for the case of NC, only one single E-E or R-E point for the two links is achieved  (see point $A$ and point $B$ in Figs.~\ref{Fig:E_E_Region} and \ref{Fig:R_E_Region}, respectively). For $(\EH,\EH)$ mode or $(\ID,\EH)$ mode, it can be observed from Fig.~\ref{Fig:E_E_Region} or Fig.~\ref{Fig:R_E_Region} that the achievable E-E or R-E region by the proposed FC scheme remarkably outperforms that with PC and NC, thanks to the collaborative energy beamforming and the optimal signal splitting at the two Txs.

\begin{figure}
\centering
\subfigure[Case 1]{
\centering
\includegraphics[width=4.1cm]{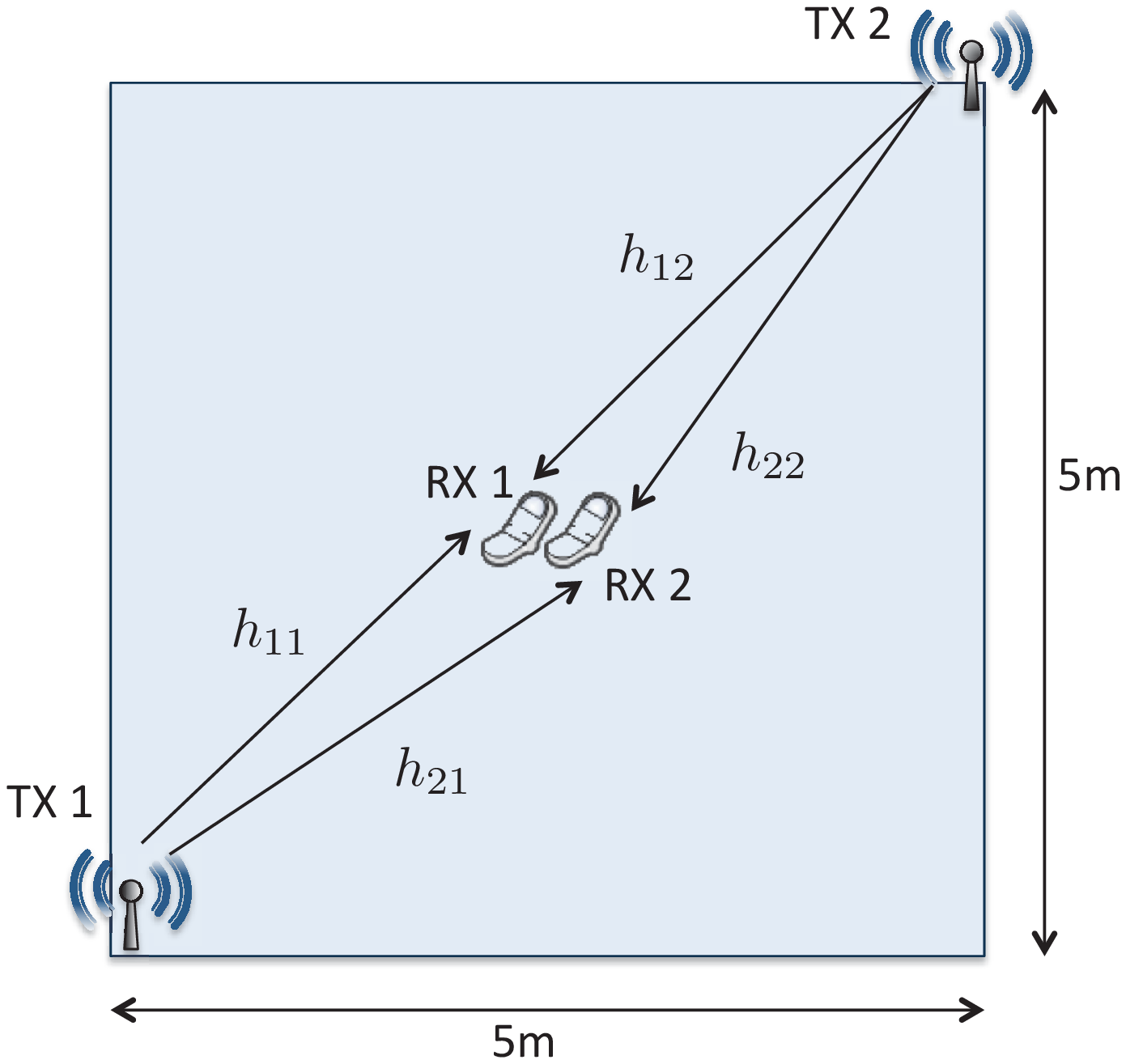}\label{Fig:Practical1}}
\subfigure[Case 2]{
\centering
\includegraphics[width=4.4cm]{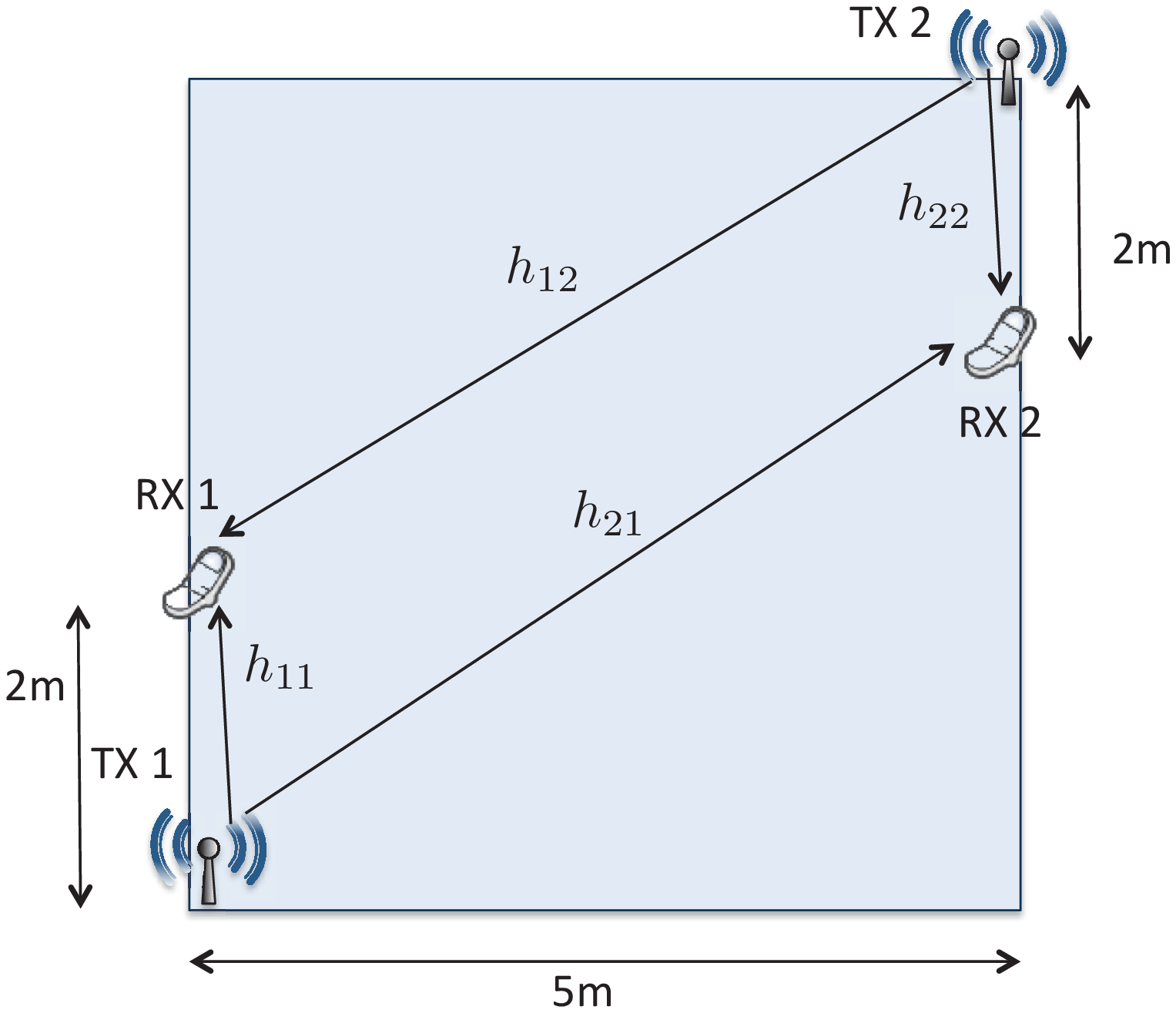}\label{Fig:Practical2}}
\caption{Simulation setup for Fig.~\ref{Fig:R_E_region_practical}. } \label{Fig:Practical} 
\end{figure}

\begin{figure}
\centering
\includegraphics[width=8cm]{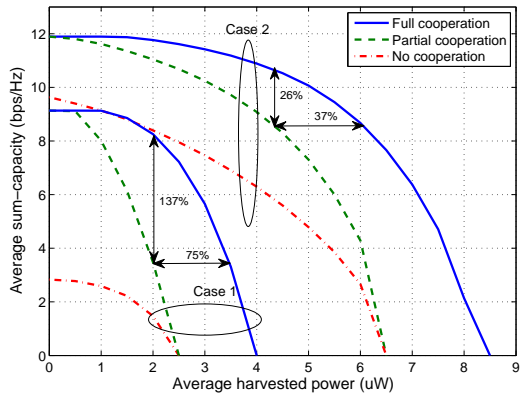}
\caption{R-E regions of the two-user SWIPT system in Rician fading channel.}  \vspace{-10pt}\label{Fig:R_E_region_practical}
\end{figure}

Inspired by the E-E and R-E performance gains in the AWGN channel, next, we show the simulation results on the achievable R-E region over flat-fading channels for different schemes, by solving problems (P1), (P2), and (P3) by setting different harvested power targets $\bar{Q}_1$ and $\bar{Q}_2$ for the two Rxs. In the simulation, we independently generate a sufficiently large number of fading states to approximate the continuous fading channel, and the theoretical expectation is obtained by sample average. For the simulation setup, it is assumed that Tx $1$ and Tx $2$ are located in two opposite corners in a $5$m$\times 5$m squared region, as shown in Fig.~\ref{Fig:Practical}. Under this setup, the line-of-sight (LoS) signal plays the dominant role, and thus Rician fading is used to model the channel, where for each fading state $\nu$ the complex channel $h_{kl}(\nu)$, $k,l\in\cK$, is defined as
{\small
\begin{equation} \label{Eq:ChannelModel}
h_{kl}(\nu) = \l(\sqrt{\frac{M}{M+1}}\hat{g} + \sqrt{\frac{1}{M+1}}g_{kl}(\nu)\r)\sqrt{c_0\l(\frac{r_{kl}}{r_0}\r)^{-\xi}},
\end{equation}} 
where $\hat{g}$ is the LoS deterministic component with $|\hat{g}|^2=1$; $g_{kl}(\nu)$ is a CSCG random variable with zero mean and unit variance denoting the short-term (Rayleigh) fading\footnote{For the short-term fading, we assume a scattering environment with moving scatters.}; $M$ is the Rician factor specifying the power ratio between the LoS and fading components in $g_{kl}(\nu)$, which is set as $M=3$; $c_0 = -20$ dB is a constant attenuation due to the path-loss at a reference distance $r_0 = 1$m at a carrier frequency assumed as $f_c=900$MHz; $\xi = 3$ is the path-loss exponent, and $r_{kl}$ is the distance between Tx $l$ and Rx $k$.
 For the purpose of exposition, we compare the following two cases with different Rx locations: In the first case, referred to as Case $1$, Rx $1$ and Rx $2$ are both located at the center of the region as shown in Fig.~\ref{Fig:Practical1}, in which both direct-link and interference-link have the same average received signal power for the two Rxs, while in the second case, referred to as Case $2$, Rx $1$ (Rx $2$) is located closer to Tx $1$ (Tx $2$) than Rx $2$ (Rx $1$) as shown in Fig.~\ref{Fig:Practical2}, in which the direct-link power is stronger than the interference-link power for each of the two links. 

Under the above setup, the achievable R-E regions are shown in Fig.~\ref{Fig:R_E_region_practical}. Note that we have set $\bar{Q}_1=\bar{Q}_2=\bar{Q}$ to plot the R-E regions. First, for both Cases $1$ and $2$, it is observed that the proposed FC achieves the best R-E trade-off as compared to the existing PC and NC schemes. Next, as observed from Fig.~\ref{Fig:R_E_region_practical}, the gain of FC is more substantial in Case $1$ than that in Case $2$ (due to stronger interference-link power). Finally, it is observed that  the R-E performance for each of the FC/PC/NC schemes is better in Case $2$ than that in Case $1$ (due to stronger direct-link power). The above results provide useful insights on how these schemes could perform in practical systems with different Tx and Rx locations. 

\subsection{Multiuser SWIPT System}

\begin{figure}
\centering
\subfigure[Grouping case 1]{
\centering
\includegraphics[width=4cm]{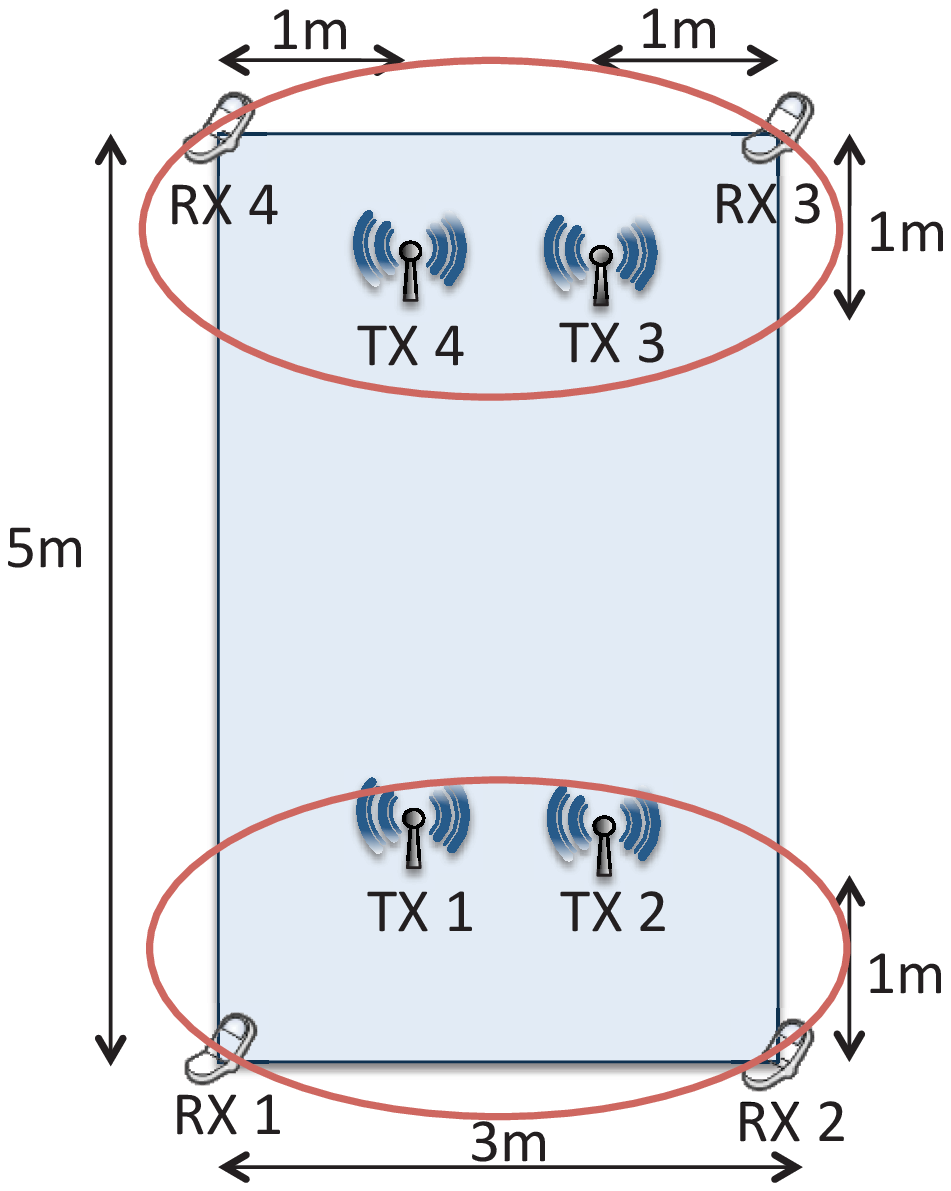}\label{Fig:Weaker}}
\subfigure[Grouping case 2]{
\centering
\includegraphics[width=4cm]{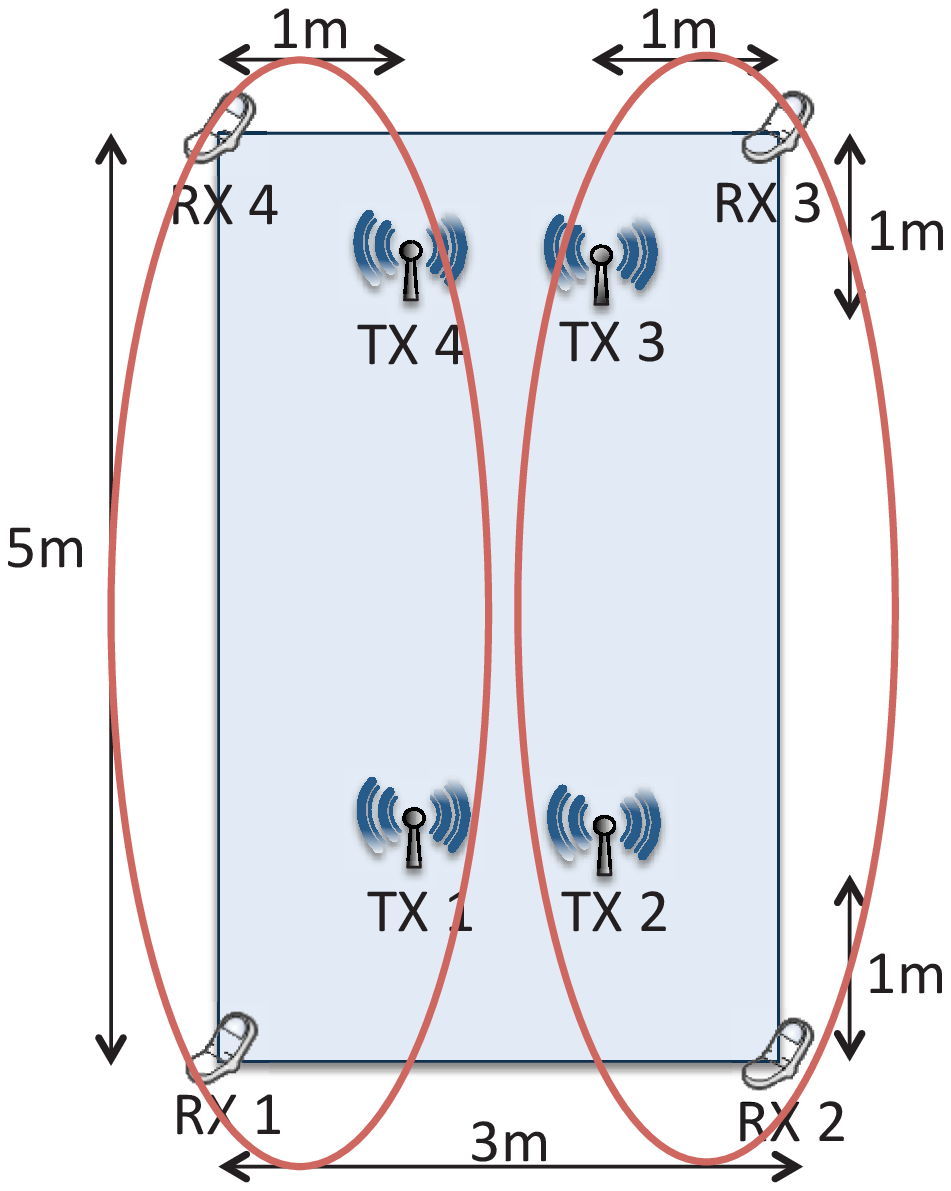}\label{Fig:Stronger}} 
\caption{Simulation setup for Fig.~\ref{Fig:Kuser_practical}.} \label{Fig:Multiuser} 
\end{figure}

\begin{figure}
\centering
\includegraphics[width=8cm]{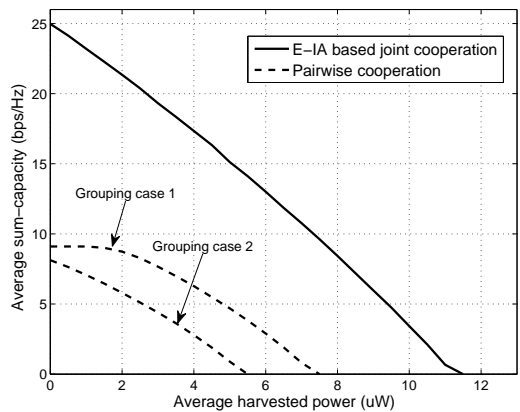}
\centering
\caption{R-E region for a four-user SWIPT system.} \vspace{-12pt}
\label{Fig:Kuser_practical}
\end{figure}

Next, we consider the $K$-user SWIPT system with $K>2$. We assume that $K=4$, and the channel model is similarly defined as in \eqref{Eq:ChannelModel}. The Tx/Rx locations are shown in ~Fig.~\ref{Fig:Multiuser}. Note that the grouping case $1$ in Fig.~\ref{Fig:Weaker} is given by our proposed grouping algorithm in Section~\ref{Subsection:Pairwise-cooperation}, which results in weaker inter-group interference but stronger intra-group interference as compared to  the grouping case $2$ in Fig.~\ref{Fig:Stronger}. Also note that the result on the E-IA based joint cooperation is based on suboptimal energy beamforming scheme with one single energy beam, obtained by randomization technique in Section~\ref{Subsection:K-user Cooperation}. First, as observed from Fig.~\ref{Fig:Kuser_practical}, under this particular setup, the E-IA based joint cooperation achieves better R-E trade-off than that of pairwise cooperation, thanks to the Txs' joint collaborative energy beamforming and E-IA based DoF (degrees-of-freedom) optimal WIT under the high-SNR regime considered here due to short-range communication\footnote{It should be noted from \cite{J_NGJV:2012} that E-IA requires symmetric phase distribution (e.g., uniform distribution) of the channels to achieve half of the interference-free rate as given in \eqref{Eq:Rate_IA}; however, the Rician channel model considered here does not satisfy such requirement due to the deterministic LoS component. As a result, the rate obtained from this simulation is not achievable in general and thus only serves as a performance upper bound. }. Next, it is observed that for the pairwise cooperation, the grouping case $1$ in Fig.~\ref{Fig:Weaker} by our proposed grouping algorithm performs better than the grouping case $2$ in Fig.~\ref{Fig:Stronger}. In fact, it has been verified by exhaustive search that under this setup the grouping case 1 is indeed optimal.

\section{Conclusions} \label{Section:Conclusion}
This paper has studied SWIPT under a multiuser interference channel setup. A new transmit scheme is proposed, namely signal splitting, to facilitate collaborative transmit energy beamforming. For the two-user case, we derive the optimal receiver mode switching rule and corresponding transmit optimization to achieve various  R-E trade-offs over fading channels. By comparing the two existing schemes with partial/no transmit cooperation, we show by simulation that there are notable R-E performance gains in SWIPT achieved by the proposed full cooperation scheme. Finally, the general case of multiuser SWIPT system is investigated  and two cooperation schemes are proposed, which are users' grouping-based pairwise cooperation and ergodic interference alignment based joint cooperation, respectively.

As for future work, it will be interesting to extend the results to the MIMO multiuser SWIPT system with multiple antennas at the transmitters and receivers, where spatial-domain interference alignment can be jointly designed with collaborative energy beamforming to optimize the R-E performance. 

\appendices 

\section{Proof of Proposition~\ref{Proposition:P1}} \label{Proof:Proposition:P1}
Since problem (P1.1) is a SDP, it is convex. Furthermore, it can be easily checked that this problem satisfies the Slater's condition. Thus, strong duality holds for (P1.1) and its dual problem \cite{B_BV:2004}. Similar to (P1), we can apply the Lagrange duality method to solve (P1.1). The Lagrangian of (P1.1) is formulated as 
{\small
\begin{align} \label{Eq:LagrangianP3.1}
\cL(\mv{S}_E,\lambda_1,\lambda_2) 
&= \mu_1 \mv{h}_1\mv{S}_E\mv{h}_1^H + \mu_2 \mv{h}_2\mv{S}_E\mv{h}_2^H  -\nn\\ 
& \quad \lambda_1({\rm Tr}(\mv{I}_1\mv{S}_E) - \Pmax)  - \lambda_2({\rm Tr}(\mv{I}_2\mv{S}_E) - \Pmax) \nn\\
&= {\rm Tr}((\mu_1 \mv{h}_1^H\mv{h}_1 + \mu_2 \mv{h}_2^H\mv{h}_2-\lambda_1 \mv{I}_1 - \lambda_2 \mv{I}_2)\mv{S}_E) + \nn\\ 
& \qquad(\lambda_1+\lambda_2)\Pmax, 
\end{align} }
where $\lambda_1$, $\lambda_2\geq 0$ are the dual variables  associated with the constraints in \eqref{Ineq:P3.1_1} and  \eqref{Ineq:P3.1_2}, respectively. 
The Lagrange dual function of (P1.1) is then given by
{\small
\begin{align} 
&u(\lambda_1,\lambda_2)  =  \max_{\mv{S}_E\succeq\mv{0}}\cL(\mv{S}_E,\lambda_1,\lambda_2) \nn\\
&= \l\{\begin{aligned}& +\infty,  &\mbox{if} \,\,  \mu_1 \mv{h}_1^H\mv{h}_1 + \mu_2 \mv{h}_2^H\mv{h}_2 - \lambda_1 \mv{I}_1 - \lambda_2 \mv{I}_2  \succ  \mv{0}, \\ 
& (\lambda_1  + \lambda_2) \Pmax,  &\mbox{if} \,\, \mu_1 \mv{h}_1^H\mv{h}_1 + \mu_2 \mv{h}_2^H\mv{h}_2-\lambda_1 \mv{I}_1 - \lambda_2 \mv{I}_2  \preceq\mv{0}. \label{Dual3.1}
\end{aligned}
\r. 
\end{align} }
As a result, the dual problem of (P1.1) is given by
\begin{align}
\mathrm{(D1.1)}:~\mathop{\mathtt{Minimize}}_{\lambda_1,\lambda_2\geq 0} &~~ (\lambda_1 + \lambda_2)\Pmax  \nn \\
\mathtt{subject \; to}&~~ \mu_1 \mv{h}_1^H\mv{h}_1 + \mu_2 \mv{h}_2^H\mv{h}_2-\lambda_1 \mv{I}_1 - \lambda_2 \mv{I}_2   \nn\\  &\preceq \mv{0}. \label{Ineq:Semidefinite2}
\end{align}
To solve (D1.1), we re-express \eqref{Ineq:Semidefinite2} as
\begin{equation} \label{Eq:A}
\l[\begin{array}{cc} ||\tilde{\mv{h}}_{1}||^2 - \lambda_1 & \tilde{\mv{h}}_1^H \tilde{\mv{h}}_2 \\ \tilde{\mv{h}}_2^H \tilde{\mv{h}}_ 1& ||\tilde{\mv{h}}_{2}||^2 - \lambda_2  \end{array}\r]\preceq \mv{0} ,
\end{equation}
where we have defined $\tilde{\mv{h}}_{k} = [\sqrt{\mu_1} h_{1k}, \sqrt{\mu_2} h_{2k}]^T$, $k=1,2$. From the theory of Schur complement \cite{B_BV:2004}, the condition in  \eqref{Eq:A} holds if and only if
\begin{align}
||\tilde{\mv{h}}_1||^2 - \lambda_1 &\leq 0, \nn\\
||\tilde{\mv{h}}_2||^2  - \lambda_2 - \frac{|\tilde{\mv{h}}_1^H \tilde{\mv{h}}_2|^2}{||\tilde{\mv{h}}_{1}||^2 - \lambda_1} &\leq 0. \label{Ineq:Schur2}
\end{align} 
It then follows from \eqref{Ineq:Schur2} that
\begin{equation} \label{Ineq:Schur3}
\lambda_2 \geq \frac{|\tilde{\mv{h}}_1^H \tilde{\mv{h}}_2|^2}{\lambda_1 - ||\tilde{\mv{h}}_{1}||^2} +  ||\tilde{\mv{h}}_2||^2. 
\end{equation}
Adding $\lambda_1$ to both sides of \eqref{Ineq:Schur3} yields
\begin{align}
\lambda_1 + \lambda_2 &\geq \frac{|\tilde{\mv{h}}_1^H \tilde{\mv{h}}_2|^2}{\lambda_1 - ||\tilde{\mv{h}}_1||^2} + \lambda_1 +   ||\tilde{\mv{h}}_2||^2 \nn\\
& = \frac{|\tilde{\mv{h}}_1^H \tilde{\mv{h}}_2|^2}{\lambda_1 - ||\tilde{\mv{h}}_1||^2} + \lambda_1 -||\tilde{\mv{h}}_1||^2 +  ||\tilde{\mv{h}}_1||^2  + ||\tilde{\mv{h}}_2||^2 \nn\\
&\geq 2|\tilde{\mv{h}}_1^H \tilde{\mv{h}}_2|  + ||\tilde{\mv{h}}_1||^2  + ||\tilde{\mv{h}}_2||^2, \label{Ineq:DualOpt}
\end{align}
where \eqref{Ineq:DualOpt} comes from the inequality of arithmetic and geometric means, and thus the equality in \eqref{Ineq:DualOpt} holds if and only if $\lambda_1$ equals $||\tilde{\mv{h}}_1||^2 + |\tilde{\mv{h}}_1^H \tilde{\mv{h}}_2|$. Hence, we conclude that the optimal solution of (D1.1) is given by $\lambda_1^\star = ||\tilde{\mv{h}}_1||^2 + |\tilde{\mv{h}}_1^H \tilde{\mv{h}}_2|$ and $\lambda_2^\star = ||\tilde{\mv{h}}_2||^2 + |\tilde{\mv{h}}_1^H \tilde{\mv{h}}_2|$. 

Next, based on the above result, we proceed to derive the optimal $\mv{S}_E$. Since (P1.1) is convex,  the optimal primal and dual solutions, denoted by $\bar{\mv{S}}_E$ and $\lambda_1^\star,\lambda_2^\star$,  must satisfy the following complementary slackness conditions: 
\begin{align} \label{KKTk}
&\lambda_k^\star({\rm Tr}(\mv{I}_k\bar{\mv{S}}_E) - \Pmax)  = 0,\; k=1,2. 
\end{align}
Since it has been shown above that $\lambda_k^\star$'s are strictly positive, it follows from \eqref{KKTk} that 
\begin{align*} 
{\rm Tr}(\mv{I}_k\bar{\mv{S}}_E) = \Pmax, \; k=1,2.
\end{align*}
Therefore, since $\bar{\mv{S}}_E$ is Hermitian, it can be expressed  as 
\begin{equation} \label{Eq:S_Hermitian}
\bar{\mv{S}}_E=\l[\begin{array}{cc} \Pmax & x \\ x^* & \Pmax  \end{array}\r].
\end{equation}
Moreover, it can be inferred from \eqref{Eq:LagrangianP3.1} and \eqref{Dual3.1} that
\begin{equation}
(\mu_1 \mv{h}_1^H\mv{h}_1 + \mu_2 \mv{h}_2^H\mv{h}_2-\lambda_1 \mv{I}_1 - \lambda_2 \mv{I}_2)\bar{\mv{S}}_E = \mv{0}. \label{KKT3}
\end{equation}
By solving the two linear equations from \eqref{KKT3} with $\bar{\mv{S}}_E$ given in \eqref{Eq:S_Hermitian}, $\bar{\mv{S}}_E$ can be obtained as
\begin{equation*} 
\bar{\mv{S}}_E=\Pmax\l[\begin{array}{cc} 1 & \alpha  \\ \frac{1}{\alpha}  & 1  \end{array}\r],
\end{equation*}
where $\alpha = \frac{\tilde{\mv{h}}_1^H \tilde{\mv{h}}_2}{\l|\tilde{\mv{h}}_1^H \tilde{\mv{h}}_2\r|} = \frac{\l|\tilde{\mv{h}}_1^H \tilde{\mv{h}}_2\r|}{\tilde{\mv{h}}_2^H \tilde{\mv{h}}_1}$. The proof is thus completed.

\bibliographystyle{ieeetr}

\end{document}